\documentclass[a4paper,twocolumn,11pt,unpublished]{quantumarticle}
\pdfoutput=1
\usepackage[english]{babel}
\usepackage[T1]{fontenc}
\usepackage[utf8]{inputenc}
\usepackage{lmodern,microtype}
\usepackage{amsmath,amssymb,amsthm,mathtools,bm}

\usepackage{graphicx,booktabs,array}
\usepackage{placeins}
\usepackage{enumitem}
\usepackage[numbers,sort&compress]{natbib}
\usepackage{xcolor}
\usepackage[colorlinks=true,allcolors=quantumviolet]{hyperref}
\usepackage[nameinlink,noabbrev]{cleveref}
\usepackage{array}
\newcolumntype{L}[1]{>{\raggedright\arraybackslash}p{#1}}
\makeatletter
\patchcmd{\@printtitle}{\@printtitletextwithappropriatefontsize}
  {\color{quantumviolet}\@printtitletextwithappropriatefontsize}{}
  {\PackageError{manuscript}{Unable to apply violet title styling}{}}
\makeatother
\newcolumntype{L}[1]{>{\raggedright\arraybackslash}p{#1}}

\hypersetup{pdftitle={Adaptive Record Consistency in a Schulman-Type Model},pdfauthor={D.M. Theshan N. Weerasinghe},pdfsubject={Composition, detector memory, and sharp calibration-disturbance bounds}}
\setlist{itemsep=2pt,topsep=4pt,leftmargin=*}
\numberwithin{equation}{section}
\newtheorem{theorem}{Theorem}[section]
\newtheorem{lemma}[theorem]{Lemma}
\newtheorem{proposition}[theorem]{Proposition}
\newtheorem{corollary}[theorem]{Corollary}
\theoremstyle{definition}
\theoremstyle{remark}
\DeclareMathOperator{\sech}{sech}
\DeclareMathOperator{\TV}{TV}
\DeclareMathOperator{\Var}{Var}
\DeclareMathOperator{\sgn}{sgn}
\DeclareMathOperator{\arcosh}{arcosh}
\newcommand{\E}{\mathbb E}
\newcommand{\T}{\mathbb T}
\newcommand{\dd}{\,\mathrm d}
\newcommand{\eps}{\varepsilon}
\newcommand{\ind}{\mathbf 1}
\newcommand{\target}{P^{0}}
\newcommand{\epstar}{\varepsilon_*}
\newcommand{\epszero}{\varepsilon_0}
\newcommand{\Csharp}{\mathcal C_{0.4}}
\newcommand{\norm}[1]{\left\lVert #1\right\rVert}
\title{Adaptive Record Consistency in a Schulman-Type Model}
\author{D.M. Theshan N. Weerasinghe}
\affiliation{Department of Electrical and Computer Systems Engineering, Faculty of Engineering, Monash University, Clayton}
\email[\newline]{theshan.dissanayakemudiyanselage1@monash.edu}
\begin{document}
\raggedbottom
\widowpenalty=10000
\clubpenalty=10000
\displaywidowpenalty=10000
\maketitle
\begin{abstract}
Adaptive measurements test whether a retrocausal model preserves observable records when experiments are connected by classical feedback. We establish exact and quantitative composition constraints for a finite-width, coplanar Schulman-type source with positive wrapped-Cauchy propagation weights. Raw segment composition admits an explicit adaptive signalling witness. A uniquely determined scalar correction preserves recorded continuations, and a detector with preparation memory implements this correction while retaining the source correlations. For common, independently prepared terminal detectors coupled to the source through arrival angles, exact calibration forces equal-mass antipodal atomic responses. Under setting-only raw feedback, the sharp minimum worst disturbance of an earlier record is $\epstar(0)=[2\cosh(2\Gamma)]^{-1}$, where $\Gamma$ is the total source-path width. We derive explicit finite-error bounds and prove linear small-error improvement. For four settings at $\Gamma=0.4$, a uniform reduction over arbitrary nonnegative response measures and a continuous polynomial certificate establish $\epstar(\delta)=\epstar(0)-\Csharp\delta+O(\delta^2)$, with $\Csharp\simeq9.351091771057010$. A positive five-atom detector attains this coefficient. The results quantify the compatibility of isolated-source accuracy and adaptive record preservation, and provide precise design conditions for retrocausal apparatus models with explicit preparation, memory, and control.
\end{abstract}
\section{Introduction}
\label{sec:intro}
A theory of measurement must predict more than the outcome table of a single apparatus arrangement. It must also specify what happens when a recorded result controls another measurement, when a detector is reused, and when a memory is reset while a separate record remains accessible. These operations are elementary in laboratory practice. Their consistency is therefore a substantive requirement for an account of quantum phenomena, including an account in which hidden variables depend on future measurement settings.

The relevant distinction is between future-setting dependence of an unobserved state and future-setting dependence of an already accessible record. The former can be part of a retrocausal explanation; the latter would alter observable statistics. A stable pointer does not settle this issue. In a probability law obtained by globally normalizing weights of entire histories, the pointer can retain the same bit within every history while a later apparatus changes the relative statistical weights of histories containing different bits. The resulting marginal record distribution can change even though no individual archive has been overwritten.

This paper develops a single quantitative study of that problem. We start with a specified positive-weight source, test its closure under adaptive composition, formulate the exact record-preservation condition, and examine an explicit memory-dependent continuation rule. We then enlarge the detector search to arbitrary nonnegative terminal response measures under a declared source interface. The resulting obstruction is exact, robust to small calibration errors, and asymptotically sharp at a benchmark width. This progression turns an isolated counterexample into a constraint on a complete class of apparatus responses.

\subsection{Relation to existing work}
The source construction is based on the anomalous-rotation model analysed by Almada \emph{et al.}~\cite{almada2016}. Their isolated-pair analysis motivates the composition problem studied here: identifying the operational extensions that preserve symmetry-based marginal cancellation. We specify the raw segment rule and feedback substitution explicitly, thereby fixing the apparatus class to which the composition results apply.

Wharton and Argaman~\cite{wharton2020}, Sec.~V.1.2, footnote~24, already describe an adaptive reconstruction that combines the preserved first-measurement marginal with a conditional model on each realized branch. Their prescription supplies the conditional reconstruction used as a positive composition benchmark below. We establish the distinct consequences of globally normalizing literally substituted raw weights, and quantify the detector constraints imposed by that composition rule. Argaman's spacetime-local framework~\cite{argaman2026} explicitly distinguishes observable signal causality from future-input dependence of internal variables and identifies the corresponding detector explanation as an open problem.

Operational closure under local processing of correlations is an established subject~\cite{allcock2009}. Our normalized-continuation proof uses its elementary probabilistic logic. Time symmetry likewise has a broader process-theoretic treatment~\cite{selby2022}; here we distinguish an endpoint reciprocity identity from reversal of an entire apparatus with memory. The present kernel-based certificate assumes neither the causal-discovery framework of Wood and Spekkens~\cite{wood2015} nor the computational-complexity assumptions of Brogioli~\cite{brogioli2024}. Entanglement-swapping questions, including the causal distinctions discussed by Price and Wharton~\cite{price2021}, motivate more extensive network tests but require joint instruments that are not constructed here.

The contribution is a unified set of composition witnesses, detector rigidity and stability theorems, and a matching continuous certificate for a sharp calibration--disturbance coefficient. Together, these results connect the operational composition problem to a fully specified optimization over positive detector measures. They establish both a universal lower bound within that class and a constructive detector family attaining its leading improvement.

\subsection{Principal results and their interpretation}
For a detector family, let $\delta$ bound the total-variation error of every calibrated isolated-source table, and let $\eps$ be the largest change of Alice's first-record probability between a feedback protocol and an actual fixed-setting experiment. The definitions and all apparatus restrictions are given in \cref{sec:class}. Writing $\epstar(\delta)$ for the infimum of $\eps$ over that class, the central quantitative conclusions are
\begin{align}
\epstar(0)&=\frac{1}{2\cosh(2\Gamma)},\label{eq:headline-exact}\\
\epstar(0)-\epstar(\delta)&=\Theta_\Gamma(\delta),\quad\delta\downarrow0,\label{eq:headline-order}\\
\epstar(\delta)&=\frac{1}{2\cosh(0.8)}-\Csharp\delta+O(\delta^2).\label{eq:headline-coeff}
\end{align}
The last line takes $\Gamma=0.4$, with the certified coefficient
\begin{equation}
\begin{gathered}
9.35109177105700958345410586084\\
{}<\Csharp<{}\\
9.35109177105700958345410586086.
\end{gathered}
\end{equation}
The coefficient theorem applies to the four-setting alphabet specified in \cref{sec:class}. Explicit finite-error lower bounds and constructive upper bounds complement this asymptotic expansion; those bounds provide the quantitative statements at nonzero calibration error.

The exact obstruction does not require Bell violation. It holds at every finite positive width, including widths at which the target source's CHSH value is below two. Its physical meaning is a conflict between two specifications within the declared apparatus interface: maintaining the isolated-source table and maintaining earlier-record statistics under the stipulated feedback coupling. Conversely, the conditional continuation construction shows that preserving earlier records does not require eliminating every hidden future-setting dependence.

\section{Finite-width source and operational conventions}
\label{sec:model}
\subsection{Angles, propagation, and boundary outcomes}
All angles belong to the circle $\T=\mathbb R/(2\pi\mathbb Z)$ and are coplanar Bloch-spin angles measured in radians. They are not optical polarization angles. Binary outcomes are $a,b,c\in\{+1,-1\}$, and an outcome $s$ at setting $q$ fixes the endpoint direction
\begin{equation}
\theta(q,s)=q+\frac\pi2(1-s).
\label{eq:theta}
\end{equation}
The wrapped-Cauchy propagation density of width $\nu>0$ is
\begin{equation}
\begin{aligned}
k_\nu(d)&=\frac{\nu}{\pi}\sum_{n\in\mathbb Z}\frac{1}{(d+2\pi n)^2+\nu^2}\\
&=\frac{\sinh\nu}{2\pi(\cosh\nu-\cos d)}.
\end{aligned}
\label{eq:kernel}
\end{equation}
It is positive, even, periodic, and normalized with respect to $\dd d$ on $\T$. Circular convolution adds widths:
\begin{equation}
\int_\T k_{\nu_1}(\phi-\alpha)k_{\nu_2}(\beta-\phi)\dd\phi
=k_{\nu_1+\nu_2}(\beta-\alpha).
\label{eq:convolution}
\end{equation}
The proof follows from the Fourier expansion in \cref{app:identities}. All results use strictly positive widths. Zero-width expressions below are limits of already normalized probabilities, not integrations against a presumed zero-width density.

\subsection{Source distribution and setting-dependent hidden posterior}
The source constrains its initial directions to $\lambda$ and $\lambda+\pi$. A uniform reference measure on $\lambda$ is used in the unnormalized history construction. With fixed arm widths $\gamma_A,\gamma_B>0$, set
\begin{equation}
\Gamma=\gamma_A+\gamma_B,\qquad v=\sech\Gamma.
\end{equation}
Absorbing a common reference-measure constant into the weight, integration over $\lambda$ gives
\begin{equation}
\begin{aligned}
&J_\Gamma(a,b\mid x,y)\\
&\quad=\int_\T k_{\gamma_A}(\theta(x,a)-\lambda)\\
&\qquad\quad\times k_{\gamma_B}(\theta(y,b)-\lambda-\pi)\dd\lambda\\
&\quad=k_\Gamma\left(x-y+\frac\pi2(1+ab)\right).
\end{aligned}
\label{eq:Jsource}
\end{equation}
Normalize the four outcome weights to obtain
\begin{equation}
\target_{xy}(a,b)=\frac14[1-ab\sech\Gamma\cos(x-y)].
\label{eq:target}
\end{equation}
Both local marginals are $1/2$. The maximal planar CHSH value is $2\sqrt2\sech\Gamma$, exceeding two when $\Gamma<\arcosh\sqrt2$.

Define the antipodal sum
\begin{equation}
R_\nu(d)=k_\nu(d)+k_\nu(d+\pi)
=\frac{\sinh\nu\cosh\nu}{\pi(\cosh^2\nu-\cos^2d)}.
\label{eq:R}
\end{equation}
It is $\pi$-periodic but not constant for finite $\nu$. The source partition weight is $\sum_{a,b}J_\Gamma=2R_\Gamma(x-y)$. In particular, constant normalized marginals do not imply a setting-independent partition weight.

The uniform reference measure is not an assumption that the normalized hidden source state is independent of settings. Before conditioning on outcomes, that state satisfies
\begin{equation}
p(\lambda\mid x,y)\ \propto\ 
R_{\gamma_A}(x-\lambda)R_{\gamma_B}(y-\lambda-\pi).
\label{eq:hidden-source}
\end{equation}
It generally depends on both future endpoints. Confusing the reference measure with this posterior would remove the feature that the model is intended to retain.

\subsection{Operational definitions of composition and accessible records}
An accessible record is an outcome or archive entry that can be read by a permitted observer or controller. A hidden angle is not automatically an accessible record. A later policy $u$ selects apparatus settings using only the earlier records available to that controller. Every allowed final outcome is included in probability calculations. When a policy is written as $z=f_u(b)$, the selected setting is a derived variable constrained by the recorded $b$, rather than an additional freely specified input.

We distinguish two consistency tests. A spacelike no-signalling test compares a remote accessible marginal under different local controls outside its past light cone. A temporal record test compares an earlier marginal under changes to a causally later apparatus, possibly after an ordinary forward message. The explicit continuation witness below can be arranged as the first test. The terminal-detector impossibility theorem uses the second. Neither test is a claim that a pointer physically rewrites its own archived value.

\section{Adaptive composition and the record-preservation criterion}
\label{sec:composition}
\subsection{Record marginals under global normalization}
Let $h$ denote a finite accessible record and $\lambda$ the remaining hidden variables at a cut. Let $w(h,\lambda)\ge0$ be a reference weight with finite nonzero record weights
\begin{equation}
\begin{aligned}
W(h)&=\int w(h,\lambda)\dd\lambda>0,\\
p_0(h)&=\frac{W(h)}{\sum_{h'}W(h')},\\
\mu_h(\dd\lambda)&=\frac{w(h,\lambda)\dd\lambda}{W(h)}.
\end{aligned}
\end{equation}
After summing all future outcomes and hidden apparatus variables, let $F_u(\lambda,h)\ge0$ be the integrated future weight. Then
\begin{equation}
p_u(h)=\frac{p_0(h)A_u(h)}{\sum_{h'}p_0(h')A_u(h')},
\qquad A_u(h)=\E_{\mu_h}F_u.
\label{eq:cut-marginal}
\end{equation}
Only histories with positive reference weight are included, and the relevant expectations are assumed finite and positive.

\begin{theorem}[Exact record condition]\label{thm:cut}
The original record law is preserved under policy $u$ if and only if one common number $\kappa(u)>0$ satisfies
\begin{equation}
\int w(h,\lambda)F_u(\lambda,h)\dd\lambda
=\kappa(u)W(h)\quad\forall h.
\label{eq:record-condition}
\end{equation}
More generally, independence of the complete record law from $u$, without requiring agreement with $p_0$, is equivalent to $A_u(h)=\kappa(u)r(h)$ for one positive function $r$.
\end{theorem}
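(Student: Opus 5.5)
The plan is to reduce everything to the marginal formula \eqref{eq:cut-marginal}, after which both claims are elementary statements about normalized positive vectors on the finite record set. First I will record the identity
\[
\int w(h,\lambda)F_u(\lambda,h)\dd\lambda = W(h)\,\E_{\mu_h}F_u = W(h)A_u(h),
\]
which follows from the definition $\mu_h(\dd\lambda)=w(h,\lambda)\dd\lambda/W(h)$ and the standing assumptions that $W(h)>0$ and that these expectations are finite and positive. Condition \eqref{eq:record-condition} is therefore equivalent to $A_u(h)=\kappa(u)$ for all $h$, that is, to $A_u$ being constant in $h$.

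For the forward direction of the first claim, I will substitute $A_u\equiv\kappa(u)$ into \eqref{eq:cut-marginal}. The common factor then cancels between numerator and denominator, and since $\sum_{h'}p_0(h')=1$ this gives $p_u=p_0$. For the converse, assume $p_u(h)=p_0(h)$ for every $h$. Each $p_0(h)>0$ because $W(h)>0$, so I can divide \eqref{eq:cut-marginal} by $p_0(h)$ and obtain $A_u(h)=Z_u:=\sum_{h'}p_0(h')A_u(h')$ for every $h$. I then set $\kappa(u)=Z_u$, which is positive because all the $A_u(h')$ are positive.

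The general statement follows the same pattern. If $A_u=\kappa(u)r$, then $p_u(h)=p_0(h)r(h)/\sum_{h'}p_0(h')r(h')$, which does not depend on $u$. Conversely, suppose the law $p_u=:q$ is independent of $u$. Fix a reference policy $u_0$ and put $r:=A_{u_0}$. Dividing \eqref{eq:cut-marginal} for $u$ by the same equation for $u_0$, which is legitimate because $q(h)>0$ and $p_0(h)>0$, gives
\[
\frac{A_u(h)}{A_{u_0}(h)}=\frac{Z_u}{Z_{u_0}}\quad\text{for all } h .
\]
Hence $A_u=\kappa(u)r$ with $\kappa(u)=Z_u/Z_{u_0}>0$. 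As a remark, I will note that $r$ is determined only up to a positive scalar, which is absorbed into $\kappa$.

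No step is a genuine obstacle. The only point needing care is positivity and finiteness: every $p_0(h)$, $A_u(h)$ and normalizer must be nonzero, so that the divisions and cancellations are valid. The statement's restriction to histories of positive reference weight, together with the assumption of finite positive expectations, guarantees this.
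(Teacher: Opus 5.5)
Your proof is correct and is essentially the paper's argument: both read everything off the marginal formula \eqref{eq:cut-marginal} by cancelling the common normalizer. The paper phrases the converse through record-to-record ratios $p_u(h)/p_u(h')$, whereas you compare each policy with a reference policy $u_0$; this is the same proportionality argument, and you additionally make explicit that \eqref{eq:record-condition} is just $W(h)A_u(h)=\kappa(u)W(h)$, i.e.\ constancy of $A_u$.
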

\begin{proof}
The ratio of probabilities for records $h,h'$ equals
$[p_0(h)/p_0(h')][A_u(h)/A_u(h')]$. Independence of every ratio is precisely proportionality of the positive vectors $A_u$. Agreement with $p_0$ requires their entries to be equal. The converses follow by cancelling the common factor in \cref{eq:cut-marginal}.
\end{proof}

After rescaling $\kappa$ to one, it is permissible to have
\begin{equation}
F_u=1+g_u,\qquad \E_{\mu_h}g_u=0,\qquad 1+g_u\ge0.
\label{eq:zero-mean}
\end{equation}
The record law is unchanged while the conditional hidden measure becomes $(1+g_u)\mu_h$. Pointwise constancy in $\lambda$ is sufficient but stronger than necessary. If $h$ includes hidden variables, demanding preservation of its full distribution imposes a correspondingly stronger condition than operational record preservation.

\subsection{Raw continuation and an exact adaptive signalling witness}
Bob first measures at $y$, records $b$, and continues the particle from endpoint $\theta(y,b)$. At a second axis $z=f_u(b)$, an added segment of width $\eta>0$ has weight
\begin{equation}
K_\eta(c,b;y,z)=k_\eta\bigl(\theta(z,c)-\theta(y,b)\bigr).
\end{equation}
The \emph{raw extension} assigns
\begin{equation}
\begin{aligned}
&p_{\mathrm{raw}}(a,b,c\mid x,y,u)\\
&\quad=\frac{J_\Gamma(a,b\mid x,y)K_\eta(c,b;y,f_u(b))}
{\mathcal Z_u(x,y)},\\
&\mathcal Z_u(x,y)\\
&\quad=\sum_{a',b',c'}J_\Gamma(a',b'\mid x,y)\\
&\qquad\qquad\times K_\eta(c',b';y,f_u(b')).
\end{aligned}
\label{eq:raw}
\end{equation}
The controller imposes its branch relation without an extra branch weight. This completes the rule being tested.

Summing $c$ gives $R_\eta(z-y)$, independent of $b$ if $z$ is fixed. Backward summation therefore preserves the source record distribution under any finite raw continuation with all axes and widths fixed independently of outcomes. Under adaptive settings, the row mass can differ between the two $b$ branches.

Set $x=y=0$. Compare $f_0(+)=f_0(-)=\pi/2$ with $f_1(+)=\pi/2$, $f_1(-)=0$. Writing $M_b(u)=R_\eta(f_u(b))$, direct summation gives
\begin{equation}
p_{\mathrm{raw}}(a=+1\mid u)
=\frac12+\frac{\sech\Gamma}{2}\frac{M_-(u)-M_+(u)}{M_-(u)+M_+(u)}.
\end{equation}
The ratio $R_\eta(0)/R_\eta(\pi/2)=\coth^2\eta$ yields
\begin{equation}
\begin{aligned}
\Delta_A&:=p_{\mathrm{raw}}(+\mid1)-p_{\mathrm{raw}}(+\mid0)\\
&=\frac{1}{2\cosh\Gamma\cosh(2\eta)}.
\end{aligned}
\label{eq:raw-witness}
\end{equation}
For equal original arms, $\Gamma=2\gamma$, recovering the equal-width expression. Alice's measurement can be spacelike separated from Bob's protocol choice and both of his measurements. No Bob outcome is selected or discarded; Alice's own complete outcome sequence has a different predicted marginal. The elementary evenness of $k_\eta$ remains true, but flipping $b$ changes the selected axis, so it no longer pairs equally weighted complete histories of this controlled experiment.

\subsection{Uniqueness of scalar continuation normalization}
Consider corrected weights
\begin{equation}
D_z(c\mid b;y)=m(b,y,z)K_\eta(c,b;y,z),\qquad m>0,
\end{equation}
where $m$ does not depend on $c$ or on remote inputs. Fix $y,\eta$ and allow independent choices of the two branch settings $z_+,z_-$.
\begin{proposition}[Scalar uniqueness]\label{prop:scalar}
For a finite-width source with $x=y$, preservation of Alice's marginal $1/2$ for every such policy is equivalent to
\begin{equation}
m(b,y,z)R_\eta(z-y)=\kappa
\label{eq:scalar-condition}
\end{equation}
for every permitted $(b,z)$, with one common positive $\kappa$.
\end{proposition}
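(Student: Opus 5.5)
The plan is to compute Alice's marginal under the corrected weights directly and then apply Theorem~\ref{thm:cut}, with Alice's outcome $a$ as the record. Set $x=y$. With a policy $z_b=f_u(b)$, the corrected history weight is $J_\Gamma(a,b\mid y,y)\,m(b,y,z_b)K_\eta(c,b;y,z_b)$.

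\emph{Step 1: sum out the second outcome.} Summing over $c$ turns the $c$-dependence into the antipodal sum, because $\theta(z,-1)=\theta(z,+1)+\pi$ and $k_\eta$ is even. So the integrated future weight attached to branch $b$ is
\[
N_b:=m(b,y,z_b)\,R_\eta(z_b-\theta(y,b)+\tfrac{\pi}{2}(1-b)\cdot 0)=m(b,y,z_b)R_\eta(z_b-y).
\]
The shift by $\theta(y,b)$ is irrelevant because $R_\eta$ is $\pi$-periodic. Hence
\[
p(a\mid u)\propto\sum_b J_\Gamma(a,b\mid y,y)\,N_b .
\]

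\emph{Step 2: write the marginal explicitly.} Using \cref{eq:Jsource} at $x=y$, we have $J_\Gamma(a,b)=k_\Gamma(\tfrac{\pi}{2}(1+ab))$. This equals $k_\Gamma(\pi)$ when $ab=+1$ and $k_\Gamma(0)$ when $ab=-1$. Therefore
\[
p(+\mid u)=\frac{k_\Gamma(\pi)N_++k_\Gamma(0)N_-}{R_\Gamma(0)(N_++N_-)} .
\]
This reproduces the raw formula with $M_b$ replaced by $N_b$:
\[
p(+\mid u)=\tfrac12+\tfrac{\sech\Gamma}{2}\,\frac{N_--N_+}{N_-+N_+}.
\]

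\emph{Step 3: equivalence.} Since $\Gamma$ is finite, $\sech\Gamma>0$. So $p(+\mid u)=1/2$ holds iff $N_+=N_-$ for the given policy. Equivalently, in the language of Theorem~\ref{thm:cut}, the two branch weights $A_u(b)$ must be equal.

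For sufficiency, \cref{eq:scalar-condition} gives $N_+=N_-=\kappa$ for every policy, so the marginal is $1/2$.

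For necessity, the hypothesis lets $z_+$ and $z_-$ be chosen independently. Fix a reference permitted $z_0$ and set $\kappa:=m(+,y,z_0)R_\eta(z_0-y)$. The policy $(z_+,z_-)=(z_0,z)$ gives $m(-,y,z)R_\eta(z-y)=\kappa$ for every permitted $z$. The policy $(z,z_0)$ then gives $m(+,y,z)R_\eta(z-y)=m(-,y,z_0)R_\eta(z_0-y)=\kappa$. Thus a single constant serves both branches and all settings, and positivity of $\kappa$ follows from $m>0$ and $R_\eta>0$.

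The main point needing care is this necessity step. One must use policies that mix a fixed reference setting with an arbitrary one, so that the constant becomes common across both branches rather than merely pairwise equal. One must also check that the set of permitted settings is the same for both branches, since that is what lets each branch take $z_0$.
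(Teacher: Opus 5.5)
Your proposal is correct and follows the paper's own proof. The paper likewise uses the equal-axis marginal $\tfrac12+\tfrac{\sech\Gamma}{2}(N_--N_+)/(N_-+N_+)$ with $\sech\Gamma>0$ to force equal corrected row masses, then fixes one branch while varying the other and reverses the roles; the converse holds because the common constant cancels. (Your caveat that both branches must share one permitted set is unnecessary, since a separate reference setting in each branch gives the same conclusion, and the stray ``$\cdot 0$'' term in Step~1 is harmless.)
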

\begin{proof}
Since $\sech\Gamma>0$, the marginal formula forces equality of the two corrected row masses for every pair $(z_+,z_-)$. Fix one branch and vary the other, then reverse their roles. Both row-mass functions must equal the same constant. Conversely, that constant cancels from the normalized joint law.
\end{proof}
Choose $\kappa=1$. The resulting transition is
\begin{equation}
\begin{aligned}
T_\eta(c\mid b;y,z)&=\frac{K_\eta(c,b;y,z)}{R_\eta(z-y)}\\
&=\frac12[1+bc\sech\eta\cos(z-y)].
\end{aligned}
\label{eq:transition}
\end{equation}
Summing $c$ preserves the entire original $(a,b)$ distribution for arbitrary $x,y$, beyond the equal-axis witness used for necessity. The transition is reciprocal under exchanging the endpoint data:
\begin{equation}
T_\eta(c\mid b;y,z)=T_\eta(b\mid c;z,y).
\label{eq:reciprocity}
\end{equation}
For fixed axes its binary matrix is doubly stochastic and obeys detailed balance for the uniform prior. This identity does not specify reversal of a controller, latch, archive, reservoir, or nonuniform boundary state.

If corrected masses are instead $M_b=\kappa(1+e_b)$ with $|e_b|\le\zeta<1$, the maximum absolute imbalance ratio is $\zeta$. Hence the equal-axis marginal obeys
\begin{equation}
\left|p(a=+1)-\frac12\right|\le\frac12\sech\Gamma\,\zeta.
\end{equation}
Opposite extreme errors attain this bound. The discrepancy between two protocols satisfying the same error envelope is at most $\sech\Gamma\,\zeta$.

\subsection{Finite adaptive closure and full local transcripts}
\begin{theorem}[Normalized continuation closure]\label{thm:closure}
Start from a normalized law of initial records that is independent of later controls. Append a finite acyclic sequence of conditional operations $T_j(c_j\mid h_j,u)$, each normalized for every allowed earlier history and each using only the controls available at that operation. Suppose changing a later control does not change an earlier factor. Then the product construction is normalized and preserves every earlier complete record prefix. Finite elimination is associative if the retained interface includes every record needed by subsequent controllers.
\end{theorem}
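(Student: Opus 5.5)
The argument runs by induction on the number $n$ of appended operations, using backward summation as in the equal-axis discussion before \cref{eq:transition}. We write the joint law as
\begin{equation}
P_u(h_0,c_1,\dots,c_n)=p_0(h_0)\prod_{j=1}^{n}T_j(c_j\mid h_j,u),
\end{equation}
where $h_j=(h_0,c_1,\dots,c_{j-1})$ is the history available at step $j$. Acyclicity lets us fix a topological order in which each $h_j$ contains only earlier outcomes. Any control $f(h_j)$ is a derived variable evaluated inside $T_j$, consistent with the convention of \cref{sec:model}.

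\textbf{Normalization.} Summing over the last outcome $c_n$ gives $\sum_{c_n}T_n(c_n\mid h_n,u)=1$ for every allowed $h_n$, and this holds whichever branch-dependent setting the controller selected. The remaining product is a law of the same form with $n-1$ factors, so induction gives total mass $\sum_{h_0}p_0(h_0)=1$. No global partition function $\mathcal Z_u$ appears. This is exactly the difference from the raw rule \cref{eq:raw}: there the row masses $R_\eta(f_u(b)-y)$ depend on the branch, whereas here every row mass equals one.

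\textbf{Prefix preservation.} The same backward elimination of $c_n,\dots,c_{k+1}$ shows that the marginal of the prefix $(h_0,c_1,\dots,c_k)$ is $p_0(h_0)\prod_{j\le k}T_j$. By hypothesis, changing a control used after step $k$ leaves these earlier factors unchanged, and $p_0$ is independent of later controls. Hence every earlier complete prefix is invariant. In the language of \cref{thm:cut}, the integrated future weight satisfies $A_u(h)=1=\kappa(u)$ identically, so the exact record condition \cref{eq:record-condition} holds with $\kappa=1$. The one point that needs care is that an "earlier factor" must be interpreted as a function of its own arguments only. A later control that feeds back into an earlier $T_j$ is excluded by hypothesis, and this is exactly where the raw extension fails.

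\textbf{Associativity.} Group a consecutive block of operations $j\in[k+1,l]$ into a single composite kernel
\begin{equation}
\tilde T(c_{k+1},\dots,c_l\mid h_{k+1},u)=\prod_{j=k+1}^{l}T_j .
\end{equation}
This kernel is normalized by the argument above. If we marginalize some internal outcomes of the block, the result is still a normalized conditional law, but only if no later controller reads those outcomes. If a later operation $T_m$ with $m>l$ depends on an eliminated record, then eliminating it first would change $T_m$'s arguments and the sums would not commute.

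So the main step to state carefully is this: when the retained interface contains every record that subsequent controllers use, Fubini's theorem for finite sums lets the elimination order be exchanged. Different bracketings of the products then give the same law. I expect this bookkeeping of the interface to be the main obstacle; the normalization and prefix steps are routine telescoping.
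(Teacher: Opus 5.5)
Your proof is correct and follows the paper's own argument: sum out the last output so that its normalized factor becomes one, repeat in reverse topological order, stop at any prefix to get preservation, and regroup finite sums for associativity under the interface condition. The paper also allows absorbing the controller into the eliminated block as an alternative to retaining the bit. One small correction: the raw extension of \cref{eq:raw} fails because its appended segment has branch-dependent row mass, so the normalization hypothesis is what breaks there. You say this correctly in your normalization paragraph, but your later remark attributes the failure to a later control altering an earlier factor, which is inaccurate.
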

\begin{proof}
Sum the last output in $p_0\prod_j T_j$; its factor becomes one for every earlier history. Repeat in reverse topological order. Stopping at any prefix proves its preservation. Associativity follows by regrouping finite sums. If a later operation uses an eliminated bit, its controller must be included in the eliminated block or the bit must remain in the interface; otherwise the experiment has changed.
\end{proof}

For normalized local transcript kernels $Q_A(t_A\mid a,x,u_A)$ and $Q_B(t_B\mid b,y,u_B)$,
\begin{equation}
\begin{aligned}
&p(t_A,t_B\mid x,y,u_A,u_B)\\
&\quad=\sum_{a,b}\target_{xy}(a,b)Q_A(t_A\mid a,x,u_A)\\
&\qquad\qquad\times Q_B(t_B\mid b,y,u_B).
\end{aligned}
\end{equation}
Summing Bob's transcript leaves $\frac12\sum_aQ_A(t_A\mid a,x,u_A)$, independent of $y,u_B$. The same holds with the parties exchanged. The argument also permits independent source copies and normalized local processing of their recorded outputs, with the original source settings fixed as external inputs. It supplies neither a hidden joint-measurement kernel nor closure under arbitrary entangling devices or causal loops. Marginalizing an actual intermediate measurement retains its disturbance; it is not equivalent to removing that apparatus physically.

\section{Detector memory and composition requirements}
\label{sec:memory}
\subsection{Invariance of branch weights under passive memory}
\begin{proposition}[Passive-memory obstruction]\label{prop:passive}
Decorate a fixed raw segment $K_u(c,\lambda,h)$ with memory variables $m$ through a conditional density $B_u(m\mid c,\lambda,h)$ satisfying $\int B_u\dd m=1$. If no other weights or accessible outcomes change, this decoration cannot alter the record-preservation condition.
\end{proposition}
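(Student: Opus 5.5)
The plan is to reduce the statement to \cref{thm:cut}. The record-preservation condition \cref{eq:record-condition} involves the weight only through the integrated future weight $F_u(\lambda,h)$, obtained by summing all future outcomes and integrating all hidden apparatus variables. I will therefore show that decorating the segment with memory leaves $F_u$ pointwise unchanged. It then follows that $A_u(h)=\E_{\mu_h}F_u$, and hence the marginals \cref{eq:cut-marginal}, are identical with and without the decoration.

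First I write the decorated future weight explicitly. Let $G_u(c,\lambda,h)\ge0$ denote the remaining (undecorated) future weight downstream of the segment, already summed over later outcomes. The hypothesis ``no other weights or accessible outcomes change'' means $G_u$ does not depend on $m$, and $m$ is not part of any accessible record $h$. The decorated integrated weight is then
\begin{equation}
\tilde F_u(\lambda,h)=\sum_c\int K_u(c,\lambda,h)B_u(m\mid c,\lambda,h)G_u(c,\lambda,h)\dd m .
\end{equation}
Second, I use Fubini/Tonelli, which is justified by nonnegativity, to perform the $m$-integral first. By the normalization $\int B_u\dd m=1$ for every $(c,\lambda,h)$, this gives $\tilde F_u=\sum_c K_uG_u=F_u$. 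Third, substituting into \cref{thm:cut}, both sides of \cref{eq:record-condition} are unchanged for every policy $u$. The condition holds with the same $\kappa(u)$ after decoration if and only if it held before. The same argument applies to the more general proportionality criterion $A_u(h)=\kappa(u)r(h)$.

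The main obstacle is conceptual rather than computational: making precise what ``no other weights or accessible outcomes change'' excludes. The memory must not feed back into later segment weights or controller choices. If a later policy read $m$, then $m$ would be an accessible record and would belong in $h$, and $G_u$ could depend on $m$, so the integral would no longer factor. I will state explicitly that the proposition covers exactly the case where $m$ is marginalized as a hidden variable with a normalized conditional density. In that case the per-branch row masses, such as $M_b(u)$ in the raw witness, are unchanged, so the adaptive signalling of \cref{eq:raw-witness} persists. This shows that any useful memory must instead carry an unnormalized, branch-dependent factor, as in \cref{prop:scalar}.
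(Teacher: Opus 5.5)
Your proposal is correct and matches the paper's proof: integrate out $m$ using $\int B_u\dd m=1$, so the integrated future weight entering \cref{eq:record-condition} is unchanged, then apply \cref{thm:cut}. Your explicit $m$-independent downstream factor $G_u$ and the Tonelli justification only spell out what the paper leaves implicit in ``no other weights change.''
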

\begin{proof}
The integrated future contribution is unchanged:
\begin{equation}
\begin{aligned}
&\sum_c\int K_u(c,\lambda,h)B_u(m\mid c,\lambda,h)\dd m\\
&\qquad=\sum_cK_u(c,\lambda,h).
\end{aligned}
\end{equation}
Insert this identity into \cref{eq:record-condition}.
\end{proof}
Deterministic pointer copying, redundant archives, and normalized noise channels on those copies are included. An active change to the interaction, outgoing state, or boundary weighting is not included. The proposition concerns marginalization of the stipulated factors, not a universal limitation on memory or decoherence.

\subsection{An effective detector with preparation memory}
Let a carried tag be $r=\bot$ for an unmeasured source arm or $r=(q,b)$ for a recorded preparation. In the latter case the incoming segment starts at $\alpha=\theta(q,b)$. An archive $H$ retains previous records. At setting $z$, define the effective segment-integrated detector weight
\begin{align}
&D_\nu(c,r',H'\mid\alpha,r,H;z)\nonumber\\
&\quad=k_\nu(\theta(z,c)-\alpha)\,C_\nu(r,z)
\ind_{r'=(z,c)}\ind_{H'=H\Vert(z,c)},\label{eq:tag-detector}\\
&C_\nu(r,z)=
\begin{cases}1,&r=\bot,\\ R_\nu(z-q)^{-1},&r=(q,b).
\end{cases}\label{eq:tag-factor}
\end{align}
Incompatible recorded tags and starting endpoints have zero weight. The effective model uses exact angle memory and a common detector formula with distinct blank and recorded input states. Its specified branch factors give the target for a microscopic implementation with preparation-dependent interactions.

\begin{proposition}[Protected-tag success]\label{prop:tag-success}
With externally fixed original source settings and blank source tags, finite adaptive local continuations using \cref{eq:tag-detector}, with valid tags retained, preserve all earlier record prefixes and the original hidden source density after the appended outcomes are marginalized.
\end{proposition}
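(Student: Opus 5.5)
The plan is to reduce the statement to \cref{thm:closure} by showing that every appended detector factor is exactly a normalized conditional operation in the sense of that theorem. Fix the original source settings $x,y$ externally. Give both source arms blank tags $r=\bot$, so on the source segments $C_\nu(\bot,z)=1$, and take the source law to be $J_\Gamma$ with the hidden density \eqref{eq:hidden-source}. For any later use of \eqref{eq:tag-detector} on a carried tag $r=(q,b)$, the incoming endpoint is forced to be $\alpha=\theta(q,b)$, because incompatible tag--endpoint pairs have zero weight. The factor then equals $k_\nu(\theta(z,c)-\theta(q,b))/R_\nu(z-q)$, which is the transition $T_\nu(c\mid b;q,z)$ of \eqref{eq:transition}. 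The indicator factors for $r'$ and $H'$ are deterministic copies. By \cref{prop:passive}, their marginalization contributes a factor of one.

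The key computation is the normalization
\[
\sum_c k_\nu(\theta(z,c)-\theta(q,b)) = R_\nu(z-q),
\]
which holds for both values of $b$ because $\theta(q,-b)=\theta(q,b)+\pi$ and $R_\nu$ is $\pi$-periodic. It follows that each appended block sums to one over its outcome $c$ and its new tag and archive, for every allowed earlier history $(\alpha,r,H)$. This holds even when the setting $z=f_u(H)$ is chosen adaptively from the archive. The reason is that the normalization is pointwise in the branch, so the branch-dependent row-mass imbalance behind \eqref{eq:raw-witness} cannot arise.

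Next I will order the operations acyclically: the source first, then each local continuation in causal order, with each controller reading only retained records. A later control affects only later factors, since $C_\nu$ depends on the already recorded $(q,b)$ and the chosen $z$, and never on anything downstream. The global normalizer $\mathcal Z_u$ is therefore the source partition weight alone. It is independent of $u$, and the normalized law has the product form $p_0\prod_j T_j$. Applying \cref{thm:closure} and summing the appended outcomes in reverse topological order then preserves every earlier record prefix. Before the final $\lambda$-integration is performed, the same reverse summation leaves the unnormalized weight $k_{\gamma_A}k_{\gamma_B}$ of the source unchanged pointwise in $\lambda$. Hence the hidden source density, both unconditioned as in \eqref{eq:hidden-source} and conditioned on $(a,b)$, is recovered exactly. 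This is the stronger, pointwise-in-$\lambda$ version of \eqref{eq:record-condition} with $\kappa=1$ and $g_u=0$.

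The main obstacle I expect is bookkeeping in the chained case. Suppose a detector's output tag is reused as the preparation for a further detector. One must check that the retained tag $(z,c)$ exactly matches the new incoming endpoint $\theta(z,c)$, so that the next factor is again a normalized $T$. One must also check that no retained tag is dropped, since dropping one would fall back to the raw rule. I would handle this by induction on the number of appended detectors, using the tag-validity hypothesis at each step. I would also note that the conclusion for the hidden density concerns only the source variable $\lambda$ after marginalizing the appended outcomes; it does not extend to hidden variables of the appended segments.
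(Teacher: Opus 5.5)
Your proposal is correct and follows the paper's proof. You identify each recorded-tag factor with the normalized transition $T_\nu$ of \cref{eq:transition}, note that the deterministic tag and archive updates add no multiplicity, reverse-sum to get a factor one for every fixed $(a,b,\lambda)$ of the source history, and conclude via \cref{thm:closure}. Your explicit check that $\sum_c k_\nu(\theta(z,c)-\theta(q,b))=R_\nu(z-q)$ for both values of $b$, using the $\pi$-periodicity of $R_\nu$, spells out a step the paper leaves implicit.
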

\begin{proof}
A recorded input reduces to \cref{eq:transition}. Its deterministic archive and tag updates add no multiplicity. Reverse summation over the added trees yields a factor one for every fixed $(a,b,\lambda)$ of the original source history. Prefix preservation and full-transcript no-signalling then follow as in \cref{thm:closure}.
\end{proof}

This construction also permits a nonconstant hidden future factor at an interior cut. Split a continuation width as $\nu=\nu_1+\nu_2$. The conditional bridge is
\begin{equation}
q(\phi\mid b,c;q,z)=
\frac{k_{\nu_1}(\phi-\theta(q,b))k_{\nu_2}(\theta(z,c)-\phi)}
{k_\nu(\theta(z,c)-\theta(q,b))}.
\end{equation}
After combining with $T_\nu$ and summing $c$,
\begin{equation}
q(\phi\mid b;q,z)=
k_{\nu_1}(\phi-\theta(q,b))\frac{R_{\nu_2}(z-\phi)}{R_\nu(z-q)}.
\label{eq:bridge}
\end{equation}
Convolution normalizes this expression. It generally depends on the future setting $z$ while the earlier accessible record law does not. Measuring $\phi$ would require another specified detector rule.

\subsection{Display resets and preparation-memory erasure}
Reset the carried tag to $\bot$ without changing the spin or an independent archive, and assign unit total weight to the reset. The next detector now uses its raw branch. With $x=y=0$, $z(+)=\pi/2$, $z(-)=0$, the earlier marginal becomes
\begin{equation}
p(a=+1)=\frac12+\frac{1}{2\cosh\Gamma\cosh(2\nu)}.
\end{equation}
At $\Gamma=0.4$, $\nu=0.2$, this is approximately $0.92781939$, compared with $0.5$ for an intact tag. Every history still contains its old archived bit. The failure is statistical reweighting.

A display-only reset is different. If the old display is logged in an environment and the preparation data $(q,b)$ remain available, the normalized continuation law is unchanged. Passing that reset does not prove closure under erasure of all preparation data. Conversely, the tag-only reset counterexample does not establish that a physical reset can leave every other variable and weight unchanged.

\subsection{Adaptive initial measurement of the partner particle}
Suppose Alice measures first at $x=0$ and sends her outcome causally to Bob. Bob then measures his still-unmeasured partner using $y(+)=\pi/2$ and $y(-)=0$. Both source tags are blank. The two realized Alice branch masses are $R_\Gamma(y(a))$, giving
\begin{equation}
p(a=+1)=\frac{R_\Gamma(\pi/2)}{R_\Gamma(\pi/2)+R_\Gamma(0)}
=\frac{1-\sech(2\Gamma)}{2}.
\label{eq:partner-failure}
\end{equation}
At $\Gamma=0.4$, the probability is $0.12615004$. This is a temporal consistency failure in an experiment containing an ordinary forward message, not the spacelike witness of \cref{eq:raw-witness}. A tag that remembers only earlier detections of the same particle is insufficient.

Within a scalar class preserving Bob's conditional outcome ratios, the independent-branch argument forces
\begin{equation}
m(a,x,y)=\frac{\kappa}{R_\Gamma(y-x)}.
\end{equation}
At unit scale, conditional reconstruction gives
\begin{equation}
\begin{aligned}
p(a,b\mid x,f)&=\frac12\frac{J_\Gamma(a,b\mid x,f(a))}{R_\Gamma(f(a)-x)}\\
&=\frac14[1-ab\sech\Gamma\cos(x-f(a))].
\end{aligned}
\label{eq:partner-repair}
\end{equation}
The required effective preparation information is $(x,-a,\Gamma)$, whereas same-particle continuation uses $(y,b,\eta)$. The total width $\Gamma$, not just Bob's leg width, is required. Multiplying by this scalar leaves the hidden bridge conditional on each realized endpoint pair unchanged.

No positive scalar $m(a,y)$ lacking $x$ can do this for all $x$ and all independently selectable branch settings. Equal branch settings first force $m(+,y)=m(-,y)=m(y)$. Preservation then requires $m(y)R_\Gamma(x-y)=C(x)$. Integrating over $x$ forces $m(y)$ constant because $\int_\T R_\Gamma(x-y)\dd x=2$, which would make $R_\Gamma$ constant, a contradiction. This argument specifies missing information in a restricted correction class; it does not prove a microscopic acquisition mechanism.

\subsection{Competing-clock implementation and source normalization}
For a known recorded input, two independent exponential clocks with rates
\begin{equation}
r_c=r_* k_\nu(\theta(z,c)-\theta(q,b)),\qquad r_*>0,
\end{equation}
have first-click density $p(c,t)=r_c\exp[-(r_++r_-)t]$. Integrating over $t\ge0$ yields $r_c/(r_++r_-)=T_\nu(c\mid b;q,z)$. A latch can store the winning channel. This is a stochastic implementation of the normalized conditional sector, assuming the rates. At finite observation time $T$, the no-click outcome with probability $\exp[-(r_++r_-)T]$ must also be included.

Using the same forward normalization on both original source legs at every fixed $\lambda$, while retaining a setting-independent uniform source distribution, gives
\begin{equation}
\begin{aligned}
t_A(a\mid\lambda,x)&=\tfrac12[1+a\sech\gamma_A\cos(x-\lambda)],\\
t_B(b\mid\lambda,y)&=\tfrac12[1-b\sech\gamma_B\cos(y-\lambda)],\\
p_{\mathrm{all}}(a,b\mid x,y)&=\tfrac14[1-ab\,v_{\mathrm{all}}\cos(x-y)],\\
v_{\mathrm{all}}&=\tfrac12\sech\gamma_A\sech\gamma_B.
\end{aligned}
\label{eq:all-normalized}
\end{equation}
This is an explicit Bell-local mixture. For equal arms its maximal planar CHSH value is $\sqrt2\sech^2\gamma<2$. At $\gamma=0.2$ it is approximately $1.359120$, versus $2.616316$ for the original source. Normalizing every hidden transition therefore does not solve the common-detector problem with that unchanged source measure.

\subsection{Hidden-state dependence of repeated-use consistency}
Let $p_0(h,\lambda)=1/4$ for $h\in\{0,1\}$ and $\lambda\in\{+1,-1\}$. For $0<\zeta<1$ and $s\in\{+1,-1\}$, consider positive nondisturbing factors
\begin{equation}
F_s(\lambda)=1+\zeta s\lambda,\qquad G(\lambda,h)=1+\zeta(-1)^h\lambda.
\end{equation}
Each has conditional mean one separately, but their product has mean $1+\zeta^2s(-1)^h$. Thus
\begin{equation}
p(h=0\mid s)=\frac{1+\zeta^2s}{2}.
\end{equation}
At $\zeta=0.4$, this gives $0.58$ and $0.42$. The first factor changes the hidden distribution on which the second was calibrated. The covariance term cannot be inferred from either isolated record test.

More generally, if $F>0$, $\E_\mu F=1$, and $\E_\mu F^2<\infty$, a controller able to apply $F$ twice to a frozen hidden state on one record branch and the identity on another can preserve both branch weights only if $F=1$ almost surely. The identity branch fixes the multiplier to one, so $\E F^2=1$ and $\Var(F)=0$. Hidden-state updates or restricted reuse can evade this lemma, but must be modelled. The one-cut condition is not itself a full composition law.

\section{An exact obstruction for terminal detectors}
\label{sec:class}
\subsection{Terminal reduction and structural assumptions}
The preceding examples leave open whether a different active detector could retain the source table and preserve adaptive records. We now allow all terminal responses compatible with a fixed interface. Integrating the antipodal source gives the arrival-angle coupling
\begin{equation}
\begin{aligned}
S_\Gamma(\phi,\psi)&=k_\Gamma(\phi-\psi+\pi)\\
&=\frac{C_\Gamma}{c+\cos(\phi-\psi)},\\
c&=\cosh\Gamma,\qquad C_\Gamma=\frac{\sinh\Gamma}{2\pi}.
\end{aligned}
\label{eq:S}
\end{equation}
For a setting $x$ and output $a$, let $d_{x,a}$ be a finite nonnegative measure on the arrival angle. It can arise by integrating an arbitrary positive local interaction over outgoing spin, memory, environment, and other terminal variables. Atomic measures are allowed to include the ideal boundary response.

The theorem class has the following structural restrictions:
\begin{enumerate}[label=(D\arabic*)]
\item \label{D1} The two arms use the same first-use terminal response family $d_{x,a}$. Their input apparatus states are independently and identically prepared for the same setting.
\item \label{D2} The arrival angle is the only source interface to each detector. No additional source-correlated preparation variable or shared apparatus environment bypasses it.
\item \label{D3} Terminal boundaries are fixed and unselected. Every apparatus output included in the model is summed; later constraints do not condition on a chosen terminal subset.
\item \label{D4} The source coupling and widths are fixed across the compared protocols. The feedback $y=f(a)$ changes the later detector only by substituting that numerical setting into the same response family, with no extra controller branch factor.
\end{enumerate}
These restrictions include active local apparatus dynamics after terminal reduction. They do not cover a common microscopic interaction receiving different source-correlated inputs in different experiments.

Define
\begin{align}
W^{ab}_{xy}&=\iint_{\T^2}S_\Gamma(\phi,\psi)d_{x,a}(\dd\phi)d_{y,b}(\dd\psi),\\
Z_{xy}&=\sum_{a,b}W^{ab}_{xy},\qquad P^{ab}_{xy}=W^{ab}_{xy}/Z_{xy}.
\label{eq:terminal-weights}
\end{align}
The identical response family and symmetric kernel imply $Z_{xy}=Z_{yx}$. Exact calibration demands $P_{xy}=\target_{xy}$ for every chosen setting pair, including equal-setting experiments. CHSH cross-pairs alone would not impose the diagonal extremality used below.

For all quantitative optimization results take
\begin{equation}
\mathcal X=\{0,\pi/4,\pi/2,3\pi/4\}.
\label{eq:alphabet}
\end{equation}
Replacing $3\pi/4$ with $-\pi/4$ and interchanging that setting's output labels gives the earlier equivalent alphabet. Calibration and the complete set of feedback policies are invariant under this relabeling.

\subsection{Extremal equal-setting correlation and rigidity}
The strict kernel extrema are
\begin{equation}
\ell=\frac{C_\Gamma}{c+1}=k_\Gamma(\pi),\qquad
u=\frac{C_\Gamma}{c-1}=k_\Gamma(0).
\label{eq:extrema}
\end{equation}
Here $u$ is the upper kernel value, not a protocol label. The minimum of $S_\Gamma$ occurs precisely at equal arrival angles and the maximum at antipodal ones.

\begin{theorem}[Equal-setting rigidity]\label{thm:rigidity}
For nonzero finite nonnegative terminal responses at setting $x$,
\begin{equation}
E_{xx}:=\sum_{a,b}abP^{ab}_{xx}\ge-\sech\Gamma.
\end{equation}
Equality holds if and only if
\begin{equation}
d_{x,+}=s_x\delta_{\vartheta_x},\qquad
d_{x,-}=s_x\delta_{\vartheta_x+\pi},\qquad s_x>0.
\label{eq:atomic}
\end{equation}
\end{theorem}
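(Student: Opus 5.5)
The plan is to avoid Fourier analysis and use only the pointwise extrema \cref{eq:extrema} of the coupling $S_\Gamma$, together with an arithmetic–geometric mean step. Write $d_\pm=d_{x,\pm}$ with total masses $m_\pm=d_\pm(\T)\ge0$, not both zero. By \ref{D1} both arms use the same family, so $W^{ab}_{xx}=\iint S_\Gamma\,d_a\otimes d_b$, and $W^{+-}_{xx}=W^{-+}_{xx}$ by the symmetry $S_\Gamma(\phi,\psi)=S_\Gamma(\psi,\phi)$. Since $S_\Gamma\ge\ell>0$ and the responses are nonzero, $Z_{xx}>0$, so $P_{xx}$ is well defined. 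Write $\mathrm{same}=W^{++}+W^{--}$ and $\mathrm{cross}=W^{+-}+W^{-+}$. Then
\begin{equation*}
E_{xx}=1-\frac{2\,\mathrm{cross}}{\mathrm{same}+\mathrm{cross}}.
\end{equation*}
Using $u/(u+\ell)=(c+1)/(2c)=\tfrac12(1+\sech\Gamma)$, the claim $E_{xx}\ge-\sech\Gamma$ is equivalent to $\mathrm{cross}/(\mathrm{same}+\mathrm{cross})\le u/(u+\ell)$. After clearing denominators this becomes the single inequality
\begin{equation*}
\ell\,(W^{+-}+W^{-+})\le u\,(W^{++}+W^{--}).
\end{equation*}

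To prove it, I bound each term pointwise. Since $S_\Gamma\le u$, we get $W^{+-}=W^{-+}\le u\,m_+m_-$. Since $S_\Gamma\ge\ell$, we get $W^{\pm\pm}\ge\ell\,m_\pm^2$. Then
\begin{equation*}
\ell\,\mathrm{cross}\le 2u\ell\,m_+m_-\le u\ell\,(m_+^2+m_-^2)\le u\,\mathrm{same},
\end{equation*}
where the middle step is AM–GM. This gives the inequality.

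The equality analysis is the only delicate part, and it amounts to tracking when each of the three steps is tight.

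\emph{Degenerate case.} If $m_-=0$ (or $m_+=0$), then $\mathrm{cross}=0$ and $E_{xx}=1>-\sech\Gamma$, so equality forces both masses to be positive.

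\emph{AM–GM step.} Tightness forces $m_+=m_-=:s_x>0$.

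\emph{Lower bound on $W^{++}$.} Tightness means $\iint(S_\Gamma-\ell)\,d_+\otimes d_+=0$ with a continuous nonnegative integrand. Hence $d_+\otimes d_+$ is supported on $\{S_\Gamma=\ell\}$, which by the remark after \cref{eq:extrema} is exactly the diagonal $\{\phi=\psi\}$. A finite measure whose product with itself is supported on the diagonal must be a single atom: if two distinct points $\phi_1\ne\phi_2$ lay in its support, the product measure would charge a neighbourhood of $(\phi_1,\phi_2)$, which lies off the diagonal. So $d_+=s_x\delta_{\vartheta_x}$, and in the same way $d_-$ is an atom of mass $s_x$.

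\emph{Upper bound on $W^{+-}$.} Tightness forces the product of these two atoms to sit on $\{S_\Gamma=u\}$, the antipodal set. So the atom of $d_-$ is at $\vartheta_x+\pi$, which is \cref{eq:atomic}.

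For the converse, substitute \cref{eq:atomic} directly: $W^{\pm\pm}=s_x^2\ell$ and $W^{\pm\mp}=s_x^2u$. This gives $E_{xx}=(\ell-u)/(\ell+u)=-1/c=-\sech\Gamma$.

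I expect the main obstacle to be the support argument in the equality case: turning "the integral of a nonnegative continuous function vanishes" into "the measure is a single atom", with the strictness of the extrema of $S_\Gamma$ (unique minimizer at equal angles, unique maximizer at antipodal ones) as the essential input. Everything else is bookkeeping.
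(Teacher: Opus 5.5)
Your proof is correct and takes essentially the same route as the paper. It uses the same pointwise kernel bounds $W^{\pm\pm}_{xx}\ge\ell m_\pm^2$ and $W^{+-}_{xx}\le u\,m_+m_-$, the same inequality $2m_+m_-\le m_+^2+m_-^2$, and the same diagonal-support and antipodal-saturation argument for equality; the only cosmetic difference is that you clear denominators to get $\ell\,\mathrm{cross}\le u\,\mathrm{same}$, whereas the paper bounds $r=2W^{+-}_{xx}/(W^{++}_{xx}+W^{--}_{xx})\le u/\ell$ and uses that $(1-r)/(1+r)$ is decreasing.
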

\begin{proof}
Let $A=\int d_{x,+}$ and $B=\int d_{x,-}$. When either mass vanishes the correlation is $+1$, so equality requires both positive. The kernel bounds give
\begin{equation}
W^{++}_{xx}+W^{--}_{xx}\ge\ell(A^2+B^2),\qquad
2W^{+-}_{xx}\le2uAB.
\end{equation}
Symmetry implies $W^{+-}_{xx}=W^{-+}_{xx}$. Therefore
\begin{equation}
r:=\frac{2W^{+-}_{xx}}{W^{++}_{xx}+W^{--}_{xx}}
\le\frac{2uAB}{\ell(A^2+B^2)}\le\frac u\ell.
\end{equation}
Since $(1-r)/(1+r)$ decreases with $r$, $E_{xx}\ge(\ell-u)/(\ell+u)=-1/c$.

Equality requires $A=B$, both self-integrals to attain their lower bounds, and the cross-integral its upper bound. Since $S_\Gamma-\ell$ is nonnegative and strictly positive away from the diagonal, each self-product measure must be supported on the diagonal. A nonzero finite measure with this property is a point mass: two disjoint positive-mass sets would give positive off-diagonal product mass. Cross saturation places the two atoms antipodally. Their masses agree. Conversely, these atoms attain all inequalities.
\end{proof}

Exact source calibration saturates this theorem at each setting. Substitution yields
\begin{align}
W^{ab}_{xy}&=s_xs_y k_\Gamma\left(\vartheta_x-\vartheta_y+\frac\pi2(1+ab)\right),\\
Z_{xy}&=2s_xs_yR_\Gamma(\vartheta_x-\vartheta_y),\label{eq:calibrated-Z}\\
P^{ab}_{xy}&=\tfrac14[1-ab\sech\Gamma\cos(\vartheta_x-\vartheta_y)].
\end{align}
All calibrated cosines must therefore agree: $\cos(\vartheta_x-\vartheta_y)=\cos(x-y)$. On the stated alphabet these orientations are related to the canonical ones by one common rotation or reflection. Arbitrary positive scales $s_x$ remain invisible in each fixed-setting normalized table.

This also derives antipodal evenness of the total convolved response. It was not imposed on the unknown detector. A regular response of nonzero angular width cannot attain the exact boundary value, but such responses are included in the approximate-calibration analysis.

\subsection{The sharp minimax first-record bias}
Let Alice record $a$ first and let Bob use the causal policy $f(+)=0$, $f(-)=\pi/2$. For exactly calibrated detectors the branch with outcome $a$ has total weight $Z_{x,f(a)}/2$. Put
\begin{equation}
t=\frac{s_{\pi/2}}{s_0},\qquad
\rho=\frac{R_\Gamma(0)}{R_\Gamma(\pi/2)}=\coth^2\Gamma>1.
\end{equation}
For Alice's settings $x=0$ and $x=\pi/2$, respectively,
\begin{equation}
p_0^f(+)=\frac\rho{\rho+t},\qquad
p_{\pi/2}^f(+)=\frac1{1+\rho t}.
\label{eq:exact-feedback}
\end{equation}
\begin{theorem}[Sharp exact obstruction]\label{thm:exact}
Within (D1)--(D4), exact calibration on an alphabet containing two orthogonal settings and their diagonal experiments forces
\begin{equation}
\begin{aligned}
&\inf_{t>0}\max\left\{\left|\frac\rho{\rho+t}-\frac12\right|,
\left|\frac1{1+\rho t}-\frac12\right|\right\}\\
&\qquad=\frac{\rho-1}{2(\rho+1)}=\frac1{2\cosh(2\Gamma)}.
\end{aligned}
\label{eq:exact-minimax}
\end{equation}
For the full four-setting disturbance objective below, equal-scale canonical atomic responses attain the same value. Thus $\epstar(0)$ equals \cref{eq:exact-minimax}.
\end{theorem}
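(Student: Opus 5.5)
The plan is to reduce everything to the one-parameter family in \cref{eq:exact-feedback} using \cref{thm:rigidity}, then solve an elementary minimax in $t$. After that, I will check that the canonical equal-scale detector does no worse on the whole four-setting policy set.

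\emph{Step 1: reduction to atoms.} Exact calibration imposes $E_{xx}=-\sech\Gamma$ for each diagonal experiment. By \cref{thm:rigidity}, every $d_{x,\pm}$ must then be the antipodal atomic pair \cref{eq:atomic}, with scale $s_x>0$. Substituting gives \cref{eq:calibrated-Z} together with $\cos(\vartheta_x-\vartheta_y)=\cos(x-y)$. Because $R_\Gamma$ depends only on $\cos^2$ of its argument, $R_\Gamma(\vartheta_x-\vartheta_y)=R_\Gamma(x-y)$, so the residual rotation or reflection of orientations is irrelevant. Under (D3)--(D4), the feedback experiment with Alice at $x$ and Bob at $f(a)$ gives branch $a$ total weight $W^{a+}+W^{a-}=\tfrac12 Z_{x,f(a)}$. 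This follows by summing Bob's atoms, each of Alice's branches carrying half of $Z$ by antipodal symmetry. For the policy $f(+)=0$, $f(-)=\pi/2$, the ratios of $s_xs_{f(a)}R_\Gamma$ reduce exactly to \cref{eq:exact-feedback}, and $s_x$ cancels. The comparison fixed-setting experiment has $p(+)=1/2$, so the disturbance is $|p-\tfrac12|$. Hence every admissible detector incurs at least the minimax in \cref{eq:exact-minimax} for the corresponding $t$.

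\emph{Step 2: the minimax.} Write the two deviations as
\[
g_1(t)=\frac{\rho-t}{2(\rho+t)},\qquad g_2(t)=\frac{1-\rho t}{2(1+\rho t)}.
\]
Both are strictly decreasing in $t$. At $t=1$ they equal $\pm\frac{\rho-1}{2(\rho+1)}$. For $t<1$ we have $g_1(t)>g_1(1)$, and for $t>1$ we have $-g_2(t)>-g_2(1)$. So the maximum of the absolute values is minimized exactly at $t=1$, with value $\frac{\rho-1}{2(\rho+1)}$. With $\rho=\coth^2\Gamma$,
\[
\frac{\rho-1}{\rho+1}=\frac{\cosh^2\Gamma-\sinh^2\Gamma}{\cosh^2\Gamma+\sinh^2\Gamma}=\frac{1}{\cosh 2\Gamma},
\]
which gives \cref{eq:exact-minimax}.

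\emph{Step 3: attainment.} Take all $s_x$ equal with canonical orientations. For any Alice setting $x\in\mathcal X$ and any policy $f$ into $\mathcal X$, Step 1 gives
\[
p(+)=\frac{R_\Gamma(f(+)-x)}{R_\Gamma(f(+)-x)+R_\Gamma(f(-)-x)}.
\]
Since $R_\Gamma(d)$ is monotone in $\cos^2 d$, its values on the alphabet differences (multiples of $\pi/4$) lie in $[R_\Gamma(\pi/2),R_\Gamma(0)]$. The ratio of the two entries therefore lies in $[\rho^{-1},\rho]$, so $|p(+)-\tfrac12|\le\frac{\rho-1}{2(\rho+1)}$, with equality for the orthogonal policy. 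Together with the lower bound, this shows $\epstar(0)$ equals the value in \cref{eq:exact-minimax}.

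\emph{Main obstacle.} The step needing the most care is making sure the "full four-setting disturbance objective" defined below contains nothing beyond the single-feedback Alice marginals analysed here. If it also includes the reverse role, with Bob first and Alice adaptive, the same computation applies by the symmetry $Z_{xy}=Z_{yx}$. If it includes policies whose outputs are unsummed terminal subsets, these are excluded by (D3). The argument must also be checked at the exact endpoint, namely that scales cannot be chosen to be setting-pair dependent. This is ensured by (D1), since each arm uses one response family with one scale per setting.
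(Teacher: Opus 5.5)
Your proposal is correct and follows the paper's own proof. Rigidity reduces every exactly calibrated detector to antipodal atoms, the orthogonal feedback policy yields the one-parameter minimax in $t$ minimized at $t=1$, and equal canonical scales attain that value because $R_\Gamma$ always lies between its orthogonal and diagonal values. Your use of the $\cos^2$-dependence of $R_\Gamma$ to dispose of the residual rotation or reflection is a slightly more direct form of the paper's remark on orientations.
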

\begin{proof}
If $t\le1$, the first probability in \cref{eq:exact-feedback} is at least $\rho/(\rho+1)$. If $t\ge1$, the second is at most $1/(\rho+1)$. Either case supplies the stated bias. At $t=1$ both attain it. With equal scales, every fixed marginal is $1/2$ and all feedback branch ratios are ratios of values of $R_\Gamma$. Its maximum and minimum on the alphabet occur at diagonal and orthogonal separations, so no other policy has larger bias. Rigidity has already exhausted all exactly calibrated response shapes.
\end{proof}

\Cref{fig:exact} shows the width dependence of the sharp obstruction.
\begin{figure}[tbp]
\centering\includegraphics[width=\columnwidth]{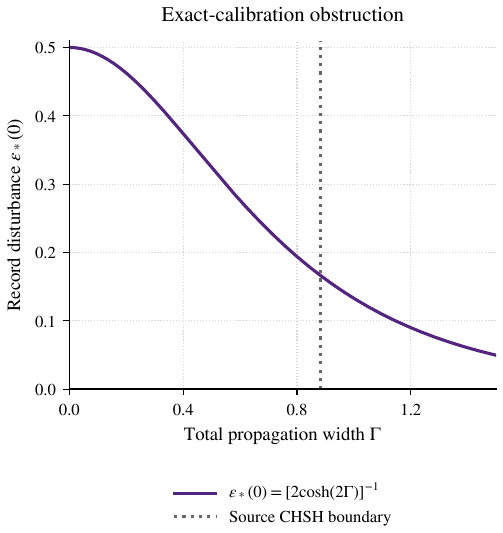}
\caption{The exact minimum worst first-record disturbance $\epstar(0)$ versus total propagation width $\Gamma$. The dotted line marks the source CHSH boundary $\Gamma=\arcosh\sqrt2$. The detector obstruction remains positive on both sides for every finite width.}
\label{fig:exact}
\end{figure}

At equal arm width $\gamma=0.2$, $\Gamma=0.4$, the target visibility is $0.92500745$ and the CHSH value is $2.61631617$. Yet the best worst bias is $0.3738499591$: the two earlier probabilities at equal scales are $0.8738499591$ and $0.1261500409$. Adjusting a setting-only scale transfers the failure from one first setting to the other.

\subsection{A finite exact orientation certificate}
The continuum proof has a compact finite diagnostic. Restrict arrival angles to $j\pi/4$, $j=0,\ldots,7$. Before rigidity there are $4\times2\times8=64$ nonnegative coefficients. After rigidity, enumerate the $8^4=4096$ positive-outcome orientations. Exact circular-distance comparisons retain 16 families: eight common rotations and two reflections. Each retains four positive scales. The minimax proof excludes record preservation for all of them, including arbitrary scale ratios.

The matrix used in this enumeration samples the \emph{already integrated} continuous coupling $k_\Gamma(\phi_i-\phi_j+\pi)$. A discrete convolution of sampled single-leg kernels would define a different source and is not used. Exhaustive orientation enumeration is a check of the reduced finite instance; the measure-class theorem is supplied by the preceding proof.

\section{Robustness under imperfect source calibration}
\label{sec:robust}
\subsection{Calibration error and actual record disturbance}
Separate scale from shape algebraically by
\begin{equation}
\begin{aligned}
M_x&=\sum_a\int d_{x,a}>0,\\
\mu_{x,a}&=d_{x,a}/M_x,\qquad\sum_a\int\mu_{x,a}=1.
\end{aligned}
\end{equation}
This is notation, not physical row normalization. Put
\begin{equation}
\begin{aligned}
A^{ab}_{xy}&=\iint S_\Gamma\,\mu_{x,a}\mu_{y,b},\qquad
z_{xy}=\sum_{a,b}A^{ab}_{xy},\\
W^{ab}_{xy}&=M_xM_yA^{ab}_{xy},\qquad Z_{xy}=M_xM_yz_{xy}.
\end{aligned}
\label{eq:shape-scale}
\end{equation}
For $0\le\delta<1/2$, calibration requires
\begin{equation}
\begin{aligned}
\Delta(\mu)&:=\max_{x,y\in\mathcal X}\TV(P_{xy},\target_{xy})\le\delta,\\
\TV(P,Q)&=\frac12\sum_{a,b}|P^{ab}-Q^{ab}|.
\end{aligned}
\label{eq:calibration}
\end{equation}
Let $p_{xy}=\sum_bP^{+b}_{xy}$. Under the stipulated raw feedback,
\begin{equation}
p_x^f(+)=\frac{\sum_bW^{+b}_{x,f(+)}}
{\sum_bW^{+b}_{x,f(+)}+\sum_bW^{-b}_{x,f(-)}}.
\label{eq:feedback-general}
\end{equation}
Define
\begin{equation}
\begin{aligned}
\eps(\mu,M)&=\max_{x,f,y}|p_x^f(+)-p_{xy}|,\\
\epstar(\delta)&=\inf_{\Delta(\mu)\le\delta}\eps(\mu,M).
\end{aligned}
\label{eq:objective}
\end{equation}
All $4^2=16$ binary branch policies are allowed at each of the four first settings; every fixed reference $y$ is compared. The objective uses actual fixed-setting marginals, which need only satisfy $|p_{xy}-1/2|\le\delta$. It does not silently substitute ideal marginals for imperfect data.

\subsection{Quantitative rigidity near the extremum}
At an equal-setting experiment, abbreviate
\begin{equation}
\begin{aligned}
A&=A^{++}_{xx}+A^{--}_{xx},\qquad B=A^{+-}_{xx}+A^{-+}_{xx},\\
q&=\frac A{A+B},\qquad q_*:=\frac\ell{u+\ell}=\frac{1-\sech\Gamma}{2}.
\end{aligned}
\end{equation}
Write $p_a=\int\mu_{x,a}$ and $m=p_+-p_-$. The outcome-aligned probability measure shifts the negative measure by $-\pi$ before adding it to the positive one. Its first moment is
\begin{equation}
v_x=\int e^{i\phi}\mu_{x,+}(\dd\phi)-\int e^{i\phi}\mu_{x,-}(\dd\phi).
\end{equation}
Set
\begin{equation}
\begin{aligned}
\beta&=\frac{(u+\ell)\delta}{2\ell(1-q_*-\delta)},\\
H&=\min\{1,(c+1)\beta\},\qquad
L=\frac\ell{2(q_*+\delta)}.
\end{aligned}
\label{eq:beta}
\end{equation}
The denominators are positive for the stated domain.

\begin{lemma}[Saturation defect]\label{lem:defect}
Every calibrated setting satisfies
\begin{equation}
\begin{gathered}
m^2+\frac{1-|v_x|^2}{c+1}
\le\frac{uA-\ell B}{u\ell}\le\beta,\\
|v_x|\ge\sqrt{1-H},\qquad z_{xx}\ge L.
\end{gathered}
\label{eq:defect}
\end{equation}
\end{lemma}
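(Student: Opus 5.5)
The plan is to parametrize the kernel by $t=\cos(\phi-\psi)\in[-1,1]$. This gives $S_\Gamma=C_\Gamma/(c+t)$, a convex decreasing function of $t$, with $S_\Gamma=u$ at $t=-1$ and $S_\Gamma=\ell$ at $t=1$. The idea is to bound the diagonal (self) integrals below and the cross integrals above by \emph{affine} functions of $t$. Integrals of $\cos(\phi-\psi)$ against product measures reduce to first moments: with $w_a=\int e^{i\phi}\mu_{x,a}(\dd\phi)$ we have $\iint\cos(\phi-\psi)\,\mu_{x,a}\mu_{x,b}=\operatorname{Re}(w_a\bar w_b)$. The resulting quadratic form should then collapse onto $m^2$ and $|v_x|^2=|w_+-w_-|^2$.

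\textbf{First inequality.} For the self terms I use the tangent at $t=1$. Since $S_\Gamma'(1)=-C_\Gamma/(c+1)^2=-\ell/(c+1)$, convexity gives $S_\Gamma\ge\ell+\ell(1-t)/(c+1)$. Hence
\[
A\ge\ell\tfrac{c+2}{c+1}(p_+^2+p_-^2)-\tfrac{\ell}{c+1}\bigl(|w_+|^2+|w_-|^2\bigr).
\]
For the cross terms I use the chord $S_\Gamma\le\tfrac{u+\ell}{2}-\tfrac{u-\ell}{2}t$. With $W^{+-}=W^{-+}$ this gives
\[
B\le(u+\ell)p_+p_--(u-\ell)\operatorname{Re}(w_+\bar w_-).
\]
The identities $u-\ell=2u/(c+1)$ and $u+\ell=2uc/(c+1)$ then let everything be divided by $u\ell$. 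After that division the moment terms combine into $-|v_x|^2/(c+1)$. Substituting $p_+^2+p_-^2=(1+m^2)/2$ and $2p_+p_-=(1-m^2)/2$ (because $p_++p_-=1$) leaves exactly $m^2+(1-|v_x|^2)/(c+1)$. This algebraic matching is the step I expect to need the most care. The main obstacle is choosing the right pair of affine bounds, tangent for $A$ and chord for $B$, so that the constants cancel exactly; once they are chosen the remainder is bookkeeping.

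\textbf{Second inequality.} The target equal-setting table has $P^{++}+P^{--}=q_*$, so calibration gives $|q-q_*|\le\delta$. Because $(u+\ell)q_*=\ell$, we can write
\[
uA-\ell B=z_{xx}\bigl[(u+\ell)q-\ell\bigr]=z_{xx}(u+\ell)(q-q_*)\le z_{xx}(u+\ell)\delta.
\]
To bound $z_{xx}$ from above, note $B\le u\cdot2p_+p_-\le u/2$ and $B=(1-q)z_{xx}$ with $1-q\ge1-q_*-\delta>0$. This gives $z_{xx}\le u/[2(1-q_*-\delta)]$, and dividing by $u\ell$ yields $\beta$.

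\textbf{Remaining claims.} From the first chain, $1-|v_x|^2\le(c+1)\beta$. Trivially $1-|v_x|^2\le1$, hence $1-|v_x|^2\le H$ and $|v_x|\ge\sqrt{1-H}$. Finally, $S_\Gamma\ge\ell$ gives $A\ge\ell(p_+^2+p_-^2)\ge\ell/2$. Together with $q\le q_*+\delta$ this yields $z_{xx}=A/q\ge\ell/[2(q_*+\delta)]=L$.
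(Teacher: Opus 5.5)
Your proof is correct and follows the paper's argument. Your tangent bound $S_\Gamma\ge\ell+\ell(1-t)/(c+1)$ and chord bound $S_\Gamma\le u(c-t)/(c+1)$ are exactly the paper's pointwise dominations of the self and cross excess integrands by $(1-\cos t)/(c+1)$, and your $\beta$, $H$ and $L$ steps coincide with the paper's; the only difference is bookkeeping, since you expand the first moments $w_\pm$ directly whereas the paper first splits off $u\ell m^2$ and integrates against the outcome-aligned measure.
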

\begin{proof}
The extrema give $A\ge\ell(p_+^2+p_-^2)$ and $B\le2up_+p_-\le u/2$. Thus $q\ge q_*$, while calibration of the same-outcome event gives $q\le q_*+\delta$. Since $A+B=B/(1-q)$,
\begin{equation}
uA-\ell B=(u+\ell)(A+B)(q-q_*)\le u\ell\beta.
\end{equation}
For the lower estimate, decompose
\begin{equation}
\begin{aligned}
uA-\ell B&=u\ell m^2+u[A-\ell(p_+^2+p_-^2)]\\
&\quad+\ell[2up_+p_--B].
\end{aligned}
\end{equation}
In outcome-aligned angle differences $t$, the excess integrands divided by $u\ell$ are $(1-\cos t)/(c+\cos t)$ for self terms and $(1-\cos t)/(c-\cos t)$ for cross terms. Both dominate $(1-\cos t)/(c+1)$. Their sum integrates to $(1-|v_x|^2)/(c+1)$. Finally, $A\ge\ell/2$ and $q\le q_*+\delta$ imply $z_{xx}=A/q\ge L$.
\end{proof}

For $H<1$ and $\vartheta_x=\arg v_x$, the aligned measure $\nu_x$ obeys the useful concentration estimate
\begin{equation}
\begin{aligned}
&\nu_x\{\operatorname{dist}(\alpha,\vartheta_x)\ge s\}\\
&\quad\le\min\left\{1,\frac{1-\sqrt{1-H}}{1-\cos s}\right\},\quad 0<s\le\pi.
\end{aligned}
\label{eq:concentration}
\end{equation}
It follows by integrating $1-\cos(\alpha-\vartheta_x)$. Approximate calibration therefore restricts both outcome masses and angular spread without imposing exact antipodal symmetry.

\subsection{Concentration-based finite-error certificate}
For a target-orthogonal setting pair, $|E_{xy}|\le2\delta$. Choose $\vartheta_x=\arg v_x$, $\vartheta_y=\arg v_y$ when nonzero and arbitrary angles otherwise. With $d_0=\vartheta_x-\vartheta_y$, the identity $S_\Gamma(c+\cos(\phi-\psi))=C_\Gamma$ gives
\begin{align}
C_\Gamma&=(c+E_{xy}\cos d_0)z_{xy}+\mathcal R,\\
\mathcal R&=\sum_{a,b}\iint S_\Gamma[\cos(\phi-\psi)-ab\cos d_0]\,\mu_{x,a}\mu_{y,b}.
\end{align}
Under the product of aligned reference measures, the bracket's squared expectation is bounded by
\begin{equation}
\E\left|e^{i(\alpha-\alpha')}-e^{id_0}\right|^2
=2(1-|v_x||v_y|)\le2H.
\end{equation}
Weighted Cauchy--Schwarz yields $|\mathcal R|\le\sqrt{2uHz_{xy}}$. Define
\begin{equation}
\begin{aligned}
a&=c-2\delta,\qquad b=\sqrt{2uH},\\
U_{\rm old}&=\min\left\{u,
\left(\frac{b+\sqrt{b^2+4aC_\Gamma}}{2a}\right)^2\right\}.
\end{aligned}
\label{eq:Uold}
\end{equation}
Solving $az_{xy}-b\sqrt{z_{xy}}-C_\Gamma\le0$ proves $z_{xy}\le U_{\rm old}$.

\begin{lemma}[Partition contrast implies record disturbance]\label{lem:odds}
If every diagonal partition shape is at least $L$ and a target-orthogonal pair has $z_{01}\le U$, define
\begin{equation}
\begin{aligned}
r&=\frac LU\frac{1-2\delta}{1+2\delta},\\
F(L,U,\delta)&=\max\left\{0,\frac{r-1}{2(r+1)}-\delta\right\}.
\end{aligned}
\label{eq:Fgeneric}
\end{equation}
Then $\eps\ge F(L,U,\delta)$ for arbitrary positive setting scales.
\end{lemma}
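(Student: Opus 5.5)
The plan is to reproduce the two-policy minimax argument of \cref{thm:exact}, now with inequalities in place of exact values. Denote the target-orthogonal pair by settings $0$ and $1$. The starting point is \cref{eq:feedback-general}, rewritten in terms of fixed-setting marginals. Since $\sum_bW^{+b}_{xy}=Z_{xy}p_{xy}$ and $\sum_bW^{-b}_{xy}=Z_{xy}(1-p_{xy})$, the feedback odds are
\begin{equation*}
\frac{p_x^f(+)}{1-p_x^f(+)}=\frac{Z_{x,f(+)}\,p_{x,f(+)}}{Z_{x,f(-)}\,\bigl(1-p_{x,f(-)}\bigr)}.
\end{equation*}
The calibration bound \cref{eq:calibration} gives $|p_{xy}-1/2|\le\delta$, because a marginal is a coarse-graining of $P_{xy}$ and total variation only decreases under coarse-graining. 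Hence every marginal factor lies in $[\tfrac12-\delta,\tfrac12+\delta]$. The factorization \cref{eq:shape-scale} gives $Z_{xx}=M_x^2z_{xx}\ge M_x^2L$, and $Z_{01}=Z_{10}=M_0M_1z_{01}\le M_0M_1U$, using the symmetry $Z_{xy}=Z_{yx}$ noted after \cref{eq:terminal-weights}.

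Set $\tau=M_0/M_1>0$; this is the only free scale parameter. I will use two policies.
\begin{enumerate}[label=(\roman*)]
\item First setting $0$ with $f(+)=0$, $f(-)=1$. Its odds are at least $\dfrac{M_0^2L(\tfrac12-\delta)}{M_0M_1U(\tfrac12+\delta)}=\tau r$.
\item First setting $1$ with $f(+)=0$, $f(-)=1$. Its odds are at most $\dfrac{M_1M_0U(\tfrac12+\delta)}{M_1^2L(\tfrac12-\delta)}=\tau/r$.
\end{enumerate}
Both policies belong to the allowed set of branch policies, so $\eps$ is at least the disturbance of either one.

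The argument then splits into two cases.
\begin{itemize}
\item If $\tau\ge1$, policy (i) has odds at least $r$, so $p_0^f(+)\ge r/(r+1)$. Comparing with the actual marginal $p_{00}\le\tfrac12+\delta$ gives $\eps\ge r/(r+1)-\tfrac12-\delta=\frac{r-1}{2(r+1)}-\delta$.
\item If $\tau\le1$, policy (ii) has odds at most $1/r$, so $p_1^f(+)\le1/(r+1)$. Comparing with $p_{10}\ge\tfrac12-\delta$ gives the same bound.
\end{itemize}
Together with the trivial bound $\eps\ge0$, this yields $\eps\ge F(L,U,\delta)$ uniformly in all positive scales.

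The main point needing care is direction bookkeeping. Each ratio bound must run the right way: the lower bound $L$ goes in the numerator of policy (i) and the denominator of policy (ii), with the opposite placement for $U$. The marginal factors must likewise be bounded in the unfavorable direction, which produces the factor $(1-2\delta)/(1+2\delta)$ in $r$. The comparison reference $y$ must also be chosen so that the actual marginal, not the ideal value $1/2$, is subtracted. The map $r\mapsto r/(r+1)$ is monotone, so the bounds on the odds transfer directly to bounds on the probabilities. When $r\le1$ the bracketed term is not positive and the $\max\{0,\cdot\}$ in \cref{eq:Fgeneric} covers that case.
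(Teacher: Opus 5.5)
Your proof is correct and follows essentially the paper's argument: the same policy $f(+)=0$, $f(-)=1$ at both orthogonal first settings, the same directional bounds on the partition shapes and fixed marginals, and the same final subtraction of $\delta$. The only difference is cosmetic: you split into cases on the scale ratio $\tau=M_0/M_1$, whereas the paper multiplies the two odds so that $O_0/O_1\ge r^2$ cancels the scales, and then concludes that $O_0\ge r$ or $O_1\le 1/r$.
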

\begin{proof}
Use $f(+)=0$, $f(-)=1$, and let $O_x=p_x^f(+)/(1-p_x^f(+))$. Direct branch summation gives
\begin{equation}
\frac{O_0}{O_1}=
\frac{Z_{00}Z_{11}}{Z_{01}^2}
\frac{p_{00}(1-p_{11})}{(1-p_{01})p_{10}}
\ge\left[\frac LU\frac{1-2\delta}{1+2\delta}\right]^2=r^2.
\end{equation}
The scale factors cancel from $Z_{00}Z_{11}/Z_{01}^2=z_{00}z_{11}/z_{01}^2$. If $r>1$, at least one of $O_0\ge r$ or $O_1\le1/r$ holds. One adaptive probability consequently differs from $1/2$ by at least $(r-1)/[2(r+1)]$. Subtract the possible fixed-marginal error $\delta$. If $r\le1$, the clipped bound is trivial.
\end{proof}

Thus $F_{\rm old}(\delta):=F(L,U_{\rm old},\delta)$ is a universal bound. At zero error,
\begin{equation}
\begin{gathered}
H=0,\qquad L=(u+\ell)/2,\qquad U_{\rm old}=C_\Gamma/c,\\
F_{\rm old}(0)=\frac1{2\cosh(2\Gamma)}.
\end{gathered}
\end{equation}
Continuity proves a positive error neighborhood at each finite $\Gamma$. The formula has a square-root correction near zero, but this will be improved below. Its unclipped zero at $\Gamma=0.4$ is approximately $0.00575528$; this is the point where this certificate stops excluding preservation, not a demonstrated feasibility threshold.

\subsection{Harmonic-mean refinement of the robustness bound}
Both outcome masses are positive when $\delta<1/2$. Define conditional first moments and an unaligned moment by
\begin{equation}
\begin{aligned}
r_{x,a}&=\frac{\int e^{i\phi}\mu_{x,a}(\dd\phi)}{p_{x,a}},\\
Q_x&=\sum_a p_{x,a}(1-|r_{x,a}|^2),\qquad t_x=\sum_a p_{x,a}r_{x,a}.
\end{aligned}
\end{equation}
Jensen's inequality and \cref{eq:defect} give $Q_x\le1-|v_x|^2\le H$. Also
\begin{equation}
\begin{aligned}
|t_x|^2+|v_x|^2
&=2\bigl(|p_{x,+}r_{x,+}|^2+|p_{x,-}r_{x,-}|^2\bigr)\\
&\le1+m_x^2.
\end{aligned}
\end{equation}
Combining with the joint defect inequality gives $|t_x|^2\le H$ and
$\sum_a p_{x,a}^2\le[1+\min\{\beta,1\}]/2$.

\begin{lemma}[Reciprocal-kernel Jensen gap]\label{lem:harmonic}
For independent conditional arrival angles, let $T=S_\Gamma(\phi,\psi)$ and $z=\cos(\phi-\psi)$. Then
\begin{equation}
\begin{aligned}
0\le\E(1/T)-\frac1{\E T}
&\le\frac{\Var(z)}{C_\Gamma(c-1)}\\
&\le\frac{2-|r_{x,a}|^2-|r_{y,b}|^2}{C_\Gamma(c-1)}.
\end{aligned}
\label{eq:jensen-gap}
\end{equation}
\end{lemma}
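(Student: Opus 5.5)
Since $1/T=(c+z)/C_\Gamma$ is affine in $z$, we have $\E(1/T)=(c+\E z)/C_\Gamma$. The statement then reduces to an elementary inequality for the convex function $g(z)=C_\Gamma/(c+z)$ on $[-1,1]$. The lower bound $0\le\E(1/T)-1/\E T$ is Jensen's inequality for $t\mapsto1/t$, or equivalently Cauchy--Schwarz $\E T\,\E(1/T)\ge1$. For the upper bound we write $\bar z=\E z$ and note that $T=g(z)$. Then
\begin{equation}
\E(1/T)-\frac1{\E T}=\frac{c+\bar z}{C_\Gamma}-\frac1{\E g(z)}
=\frac{(c+\bar z)\,\E g(z)-C_\Gamma}{C_\Gamma\,\E g(z)}.
\end{equation}

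We next bound $\E g(z)-g(\bar z)$ by a second-order Taylor estimate. The exact identity
\begin{equation}
g(z)-g(\bar z)-g'(\bar z)(z-\bar z)=\frac{C_\Gamma(z-\bar z)^2}{(c+z)(c+\bar z)^2}
\end{equation}
holds, and taking expectations kills the linear term. This gives
\begin{equation}
(c+\bar z)\E g(z)-C_\Gamma=\E\frac{C_\Gamma(z-\bar z)^2}{(c+z)(c+\bar z)}.
\end{equation}
Dividing by $C_\Gamma\E g(z)$, the numerator is bounded using $c+z\ge c-1$. The remaining factor $(c+\bar z)\E g(z)$ is at least $C_\Gamma$ by Jensen. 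We therefore obtain
\begin{equation}
\E(1/T)-\frac1{\E T}\le\frac{\E(z-\bar z)^2}{(c-1)(c+\bar z)\E g(z)}\le\frac{\Var(z)}{C_\Gamma(c-1)}.
\end{equation}
This is the first inequality, and the key step is choosing the exact quadratic remainder so that the denominators combine cleanly.

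The second inequality bounds $\Var(z)\le\E|e^{i(\phi-\psi)}-\E e^{i(\phi-\psi)}|^2$, because $z$ is the real part of the complex variable $w=e^{i(\phi-\psi)}$ and the variance of a real part is at most the complex variance. Since $|w|=1$ and $\phi,\psi$ are independent with conditional first moments $r_{x,a}$ and $\overline{r_{y,b}}$, we get $\E|w-\E w|^2=1-|r_{x,a}|^2|r_{y,b}|^2$. Finally,
\begin{equation}
1-|r_{x,a}|^2|r_{y,b}|^2\le(1-|r_{x,a}|^2)+(1-|r_{y,b}|^2),
\end{equation}
which is $(1-s)(1-t)\ge0$ rearranged. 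This yields the stated bound. The only delicate point is the bookkeeping in the remainder estimate of the second paragraph; the remaining steps are routine moment identities.
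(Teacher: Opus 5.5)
Your proof is correct and follows the paper's argument. Your exact Taylor remainder for $g(z)=C_\Gamma/(c+z)$ rederives the paper's harmonic-mean identity $\bar Y-h_Y=\E[(Y-\bar Y)^2/Y]/(\bar Y\,\E(1/Y))$ with $Y=c+z$; you then use the same two bounds, $Y\ge c-1$ and $\bar Y\,\E(1/Y)\ge 1$, and the same real-part versus complex-variance step, including $1-st\le(1-s)+(1-t)$.
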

\begin{proof}
Put $Y=c+z$, $\bar Y=\E Y$, and $h_Y=1/\E(1/Y)$. An exact identity is
\begin{equation}
\bar Y-h_Y=
\frac{\E[(Y-\bar Y)^2/Y]}{\bar Y\E(1/Y)}\le\frac{\Var(Y)}{c-1}.
\end{equation}
Use $Y\ge c-1$ and $\bar Y\E(1/Y)\ge1$, then divide by $C_\Gamma$. The variance of a real part is no greater than its complex variance, so
$\Var(z)\le1-|r_{x,a}|^2|r_{y,b}|^2\le(1-|r_{x,a}|^2)+(1-|r_{y,b}|^2)$.
\end{proof}

Multiply by the conditional reference masses and sum outcomes:
\begin{equation}
\frac{c+\operatorname{Re}(t_x\overline{t_y})}{C_\Gamma}
\le\sum_{a,b}\frac{p_{x,a}^2p_{y,b}^2}{A^{ab}_{xy}}
+\frac{Q_x+Q_y}{C_\Gamma(c-1)}.
\label{eq:harmonic-sum}
\end{equation}
Both squared masses are necessary since $A^{ab}_{xy}=p_{x,a}p_{y,b}\E T$. For an orthogonal target pair, $P^{ab}_{xy}\ge1/4-\delta$. When $\delta<1/4$,
\begin{equation}
\frac{c-H}{C_\Gamma}
\le\frac{[1+\min\{\beta,1\}]^2}{z_{xy}(1-4\delta)}+
\frac{2H}{C_\Gamma(c-1)}.
\end{equation}
Define
\begin{equation}
\begin{aligned}
D_{\rm lin}&=c-H-\frac{2H}{c-1},\\
U_{\rm lin}&=\begin{cases}
\min\left\{u,\dfrac{C_\Gamma[1+\min\{\beta,1\}]^2}{(1-4\delta)D_{\rm lin}}\right\},\\[-2pt]
\hfill \delta<1/4,\ D_{\rm lin}>0,\\
u,\hfill\text{otherwise}.
\end{cases}
\end{aligned}
\label{eq:Ulin}
\end{equation}
Thus $z_{xy}\le U_{\rm lin}$. The combined certificate is
\begin{equation}
F_{\rm new}(\delta)=F\bigl(L,\min\{U_{\rm old},U_{\rm lin}\},\delta\bigr).
\label{eq:Fnew}
\end{equation}

\begin{theorem}[Robust obstruction]\label{thm:robust}
For (D1)--(D4), fixed finite $\Gamma>0$, and the calibration condition \cref{eq:calibration},
\begin{equation}
\epstar(\delta)\ge F_{\rm new}(\delta)\ge F_{\rm old}(\delta).
\end{equation}
For sufficiently small error the improved bound has a linear, rather than square-root, loss from its exact endpoint.
\end{theorem}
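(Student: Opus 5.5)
The plan is to assemble the theorem from the preceding lemmas in three steps: the lower bound, the comparison between the two certificates, and the linear small-error behaviour.

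\emph{Lower bound.} Fix any admissible family $(\mu,M)$ with $\Delta(\mu)\le\delta$. \Cref{lem:defect} gives $z_{xx}\ge L$ at every diagonal setting. Take the target-orthogonal pair $(0,\pi/2)$ of the alphabet \cref{eq:alphabet}. The Cauchy--Schwarz remainder estimate gives $z_{01}\le U_{\rm old}$. When $\delta<1/4$ and $D_{\rm lin}>0$, rearranging the harmonic-sum inequality \cref{eq:harmonic-sum}, with the orthogonal-pair lower bound $P^{ab}\ge1/4-\delta$ and the mass bound $\sum_ap_{x,a}^2\le[1+\min\{\beta,1\}]/2$, gives $z_{01}\le U_{\rm lin}$. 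Otherwise $U_{\rm lin}=u$, which is trivially valid because $S_\Gamma\le u$ and the $\mu$'s have total mass one. Hence $z_{01}\le\min\{U_{\rm old},U_{\rm lin}\}$. \Cref{lem:odds} then gives $\eps(\mu,M)\ge F_{\rm new}(\delta)$ for arbitrary scales $M$. Taking the infimum over the calibrated class proves $\epstar(\delta)\ge F_{\rm new}(\delta)$.

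\emph{Comparison of certificates.} The quantity $r$ in \cref{eq:Fgeneric} is decreasing in $U$, and $(r-1)/[2(r+1)]$ is increasing in $r$. So $F(L,\cdot,\delta)$ is nonincreasing in $U$, and since $\min\{U_{\rm old},U_{\rm lin}\}\le U_{\rm old}$ we get $F_{\rm new}\ge F_{\rm old}$.

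\emph{Linear loss.} At $\delta=0$ we have $H=\beta=0$ and $D_{\rm lin}=c$, so $U_{\rm lin}=\min\{u,C_\Gamma/c\}=C_\Gamma/c=U_{\rm old}(0)$. Thus $F_{\rm new}(0)=1/[2\cosh(2\Gamma)]$, matching \cref{thm:exact}. For small $\delta$, the quantity $\beta$ is $O(\delta)$, with explicit positive denominators since $q_*<1/2$. Hence $H\le(c+1)\beta=O(\delta)$ and $L=L(0)+O(\delta)$. Every term in $U_{\rm lin}$ (namely $[1+\beta]^2$, $(1-4\delta)^{-1}$ and $D_{\rm lin}^{-1}$) is a smooth function of $(\delta,H,\beta)$ near the origin with bounded derivatives. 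Therefore $U_{\rm lin}=C_\Gamma/c+O(\delta)$, and the minimum $U\le U_{\rm lin}$ gives $r\ge r(0)-O(\delta)$.

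Because $r(0)=\rho^{\ast}>1$ and $F$ is smooth in $r$ away from the clipping, we obtain $F_{\rm new}(\delta)\ge F_{\rm new}(0)-K_\Gamma\delta$ for some finite $K_\Gamma$. In contrast, $U_{\rm old}$ contains $b=\sqrt{2uH}=O(\sqrt\delta)$, so $F_{\rm old}$ only loses $O(\sqrt\delta)$.

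The main obstacle I expect is checking that the $\delta<1/4$, $D_{\rm lin}>0$ branch is active on a whole neighbourhood of zero, which it is since $D_{\rm lin}\to c>1$. I would also verify that no hidden $\sqrt H$ enters the harmonic route. The Jensen gap in \cref{lem:harmonic} is controlled by variances, and these are bounded by $Q_x+Q_y\le2H$, which is linear in $H$ rather than its square root, so the constants stay uniform.
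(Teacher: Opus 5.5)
Your proposal is correct and follows the paper's route: apply \cref{lem:odds} with $U=\min\{U_{\rm old},U_{\rm lin}\}$, use monotonicity of $F$ in $U$ for $F_{\rm new}\ge F_{\rm old}$, and get the linear loss from the smooth $O(\delta)$ dependence of $L$ and $U_{\rm lin}$, which the paper makes explicit as the expansion \cref{eq:Kgamma}. One small phrasing slip: to contrast with the old certificate you want its loss to be of order at least $\sqrt\delta$, not merely $O(\sqrt\delta)$, and this holds because $U_{\rm old}-C_\Gamma/c$ grows like $b\propto\sqrt\delta$.
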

\begin{proof}
Apply \cref{lem:odds} using either proved cross-weight bound, and take their minimum. The explicit expansion given next establishes the linear claim.
\end{proof}

\subsection{Sharp asymptotic order at every fixed positive width}
Put $\rho=\coth^2\Gamma$ and $B_\Gamma=2c^2/(c^2-1)$. Expanding the linear certificate gives
\begin{equation}
\begin{aligned}
F_{\rm lin}(\delta)&=\epszero-K_\Gamma\delta+O_\Gamma(\delta^2),\\
K_\Gamma&=1+\frac\rho{(\rho+1)^2}
\biggl[\frac1{q_*}+2B_\Gamma+8\\
&\hspace{5em}+\frac{B_\Gamma(c+1)^2}{c(c-1)}\biggr],\\
\epszero&=\frac1{2\cosh(2\Gamma)}.
\end{aligned}
\label{eq:Kgamma}
\end{equation}
To supply a constructive comparison, choose endpoint responses smoothed by a wrapped-Cauchy density:
\begin{equation}
\mu_{x,a}(\dd\phi)=\frac12k_\sigma(\phi-\theta(x,a))\dd\phi,\qquad\sigma>0.
\end{equation}
The source coupling remains $S_\Gamma$, but exact response convolution produces table width $G=\Gamma+2\sigma$. Equal scales give
\begin{equation}
\delta_\sigma=\frac12(\sech\Gamma-\sech G),\qquad
\eps_\sigma=\frac1{2\cosh(2G)}.
\end{equation}
For $v=\sech\Gamma$ and $v-2\delta>0$,
\begin{equation}
\begin{aligned}
\epstar(\delta)&\le G_{\rm smooth}(\delta)
=\frac{(v-2\delta)^2}{2[2-(v-2\delta)^2]}\\
&=\epszero-a_\Gamma\delta+O_\Gamma(\delta^2),\\
a_\Gamma&=\frac{4v}{(2-v^2)^2}>0.
\end{aligned}
\label{eq:smooth-upper}
\end{equation}
The atomic endpoint is allowed at $\delta=0$.

\begin{corollary}[Linear small-error order]\label{cor:order}
For each fixed finite $\Gamma>0$,
\begin{equation}
\epszero-\epstar(\delta)=\Theta_\Gamma(\delta),\qquad\delta\downarrow0.
\end{equation}
More explicitly, the liminf of $[\epszero-\epstar(\delta)]/\delta$ is at least $a_\Gamma>0$, and its limsup is at most $K_\Gamma<\infty$.
\end{corollary}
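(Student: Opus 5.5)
The corollary is a sandwich between two results already in hand: the universal lower bound of \cref{thm:robust} together with the expansion \cref{eq:Kgamma}, and the constructive smoothed-response upper bound \cref{eq:smooth-upper}. The plan is to turn each into a one-sided statement about $[\epszero-\epstar(\delta)]/\delta$ and then combine them.

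\emph{Upper limit.} For small $\delta$, \cref{thm:robust} gives $\epstar(\delta)\ge F_{\rm new}(\delta)$. Since $U_{\rm new}=\min\{U_{\rm old},U_{\rm lin}\}\le U_{\rm lin}$ and $F(L,U,\delta)$ is nonincreasing in $U$, we also have $F_{\rm new}(\delta)\ge F(L,U_{\rm lin},\delta)=:F_{\rm lin}(\delta)$. By \cref{eq:Kgamma}, $F_{\rm lin}(\delta)=\epszero-K_\Gamma\delta+O_\Gamma(\delta^2)$. Hence
\[
\frac{\epszero-\epstar(\delta)}{\delta}\le K_\Gamma+O_\Gamma(\delta),
\]
and the limsup is at most $K_\Gamma<\infty$.

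\emph{Lower limit.} By \cref{eq:smooth-upper}, $\epstar(\delta)\le G_{\rm smooth}(\delta)=\epszero-a_\Gamma\delta+O_\Gamma(\delta^2)$. This inequality needs some care. The wrapped-Cauchy family attains a calibration error of exactly $\delta_\sigma=\tfrac12(\sech\Gamma-\sech G)$. As $\sigma$ ranges over $(0,\infty)$, $\delta_\sigma$ increases continuously from $0$ towards $v/2$, so every $\delta\in(0,v/2)$ equals some $\delta_\sigma$. For that $\sigma$ the response is feasible, because $\Delta(\mu)=\delta_\sigma\le\delta$. Inverting the relation gives $\sech G=v-2\delta$. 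Substituting into $\eps_\sigma=1/(2\cosh 2G)=\sech^2G/[2(2-\sech^2G)]$ reproduces $G_{\rm smooth}$.

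Two further checks are needed for this constructive bound.
\begin{itemize}
\item The objective $\eps$ for this family must really equal $1/(2\cosh 2G)$ over all sixteen policies, all four first settings and all references $y$. The argument is the one in the proof of \cref{thm:exact}, now with width $G$ in place of $\Gamma$. Equal scales make every fixed marginal exactly $1/2$, and branch ratios are ratios of values of $R_G$, extremal at diagonal and orthogonal separations.
\item The coefficient is obtained by differentiating $g(w)=w^2/[2(2-w^2)]$ at $w=v$. This gives $g'(v)=2v/(2-v^2)^2$, and since $w=v-2\delta$, the chain rule supplies an extra factor $2$, yielding $a_\Gamma=4v/(2-v^2)^2$.
\end{itemize}
It follows that $[\epszero-\epstar(\delta)]/\delta\ge a_\Gamma-O_\Gamma(\delta)$, so the liminf is at least $a_\Gamma>0$.

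The two bounds together give $\Theta_\Gamma(\delta)$. The main obstacle is the expansion \cref{eq:Kgamma} itself, which the corollary simply assumes. If it had to be checked, the route is to expand $L$, $\beta$, $H$ and $U_{\rm lin}$ to first order in $\delta$, using $H=(c+1)\beta$ for small $\delta$. One then confirms that $U_{\rm lin}$, not $U_{\rm old}$, realizes the minimum near zero, since $U_{\rm old}$ carries a square-root correction through $b=\sqrt{2uH}$. Finally, one pushes these expansions through $r=(L/U)(1-2\delta)/(1+2\delta)$ and $(r-1)/[2(r+1)]-\delta$, whose derivative in $r$ is $1/(r+1)^2$, with $r(0)=\rho$.
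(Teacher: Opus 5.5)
Your proposal is correct and is essentially the paper's proof: sandwich $\epstar(\delta)$ between the linear lower certificate, with $F_{\rm new}\ge F(L,U_{\rm lin},\delta)$ following from monotonicity of $F$ in $U$ and then \cref{eq:Kgamma}, and the smoothed-response upper bound \cref{eq:smooth-upper}. Your extra checks fill in details the paper leaves implicit: every $\delta<v/2$ is attained exactly by some $\sigma$, the width-$G$ family has $\eps=1/(2\cosh 2G)$, and the chain rule gives $a_\Gamma$; given the monotonicity step, your closing remark that $U_{\rm lin}$ realizes the minimum is not actually needed.
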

\begin{proof}
Sandwich the infimum between \cref{eq:smooth-upper} and the linear lower certificate. No attainment or differentiability of the full optimum is assumed.
\end{proof}
At $\Gamma=0.4$, $K_\Gamma\simeq83.34211690$ and $a_\Gamma\simeq2.82539364$. The original lower certificate instead has a strictly negative $\sqrt\delta$ leading correction. Since the improved valid lower bound is larger for sufficiently small $\delta$, the older bound cannot be attained or approached there. This proves its non-tightness analytically.

\subsection{Finite-error comparison families}
A positive five-atom response, developed in \cref{sec:sharp}, improves on smoothing near the exact endpoint. \Cref{tab:bounds} reports direct contractions of that response, providing constructive upper bounds on the finite-error optimum. \Cref{fig:bounds} resolves the near-zero regime, while \cref{fig:bounds-finite} shows the broader finite-error comparison.
\begin{table}[tbp]
\centering\small
\setlength{\tabcolsep}{3pt}
\caption{Bounds on $\epstar(\delta)$ at $\Gamma=0.4$. The two lower bounds are $F_{\rm old}$ (concentration) and $F_{\rm new}$ (combined); the upper bound is attained by the five-atom family. Values are rounded to eight decimal places.}
\label{tab:bounds}
\begin{tabular}{@{}rrrr@{}}\toprule
$\delta$ & $F_{\rm old}$ & $F_{\rm new}$ & Upper bound\\\midrule
$0$&0.37384996&0.37384996&0.37384996\\
$10^{-6}$&0.37099062&0.37376657&0.37384061\\
$10^{-5}$&0.36458849&0.37301162&0.37375648\\
$10^{-4}$&0.34232077&0.36499658&0.37291775\\
$10^{-3}$&0.25146332&0.25146332&0.36477274\\
$5\times10^{-3}$&0.03076255&0.03076255&0.33269036\\\bottomrule
\end{tabular}
\end{table}
\begin{figure}[tbp]
\centering\includegraphics[width=\columnwidth]{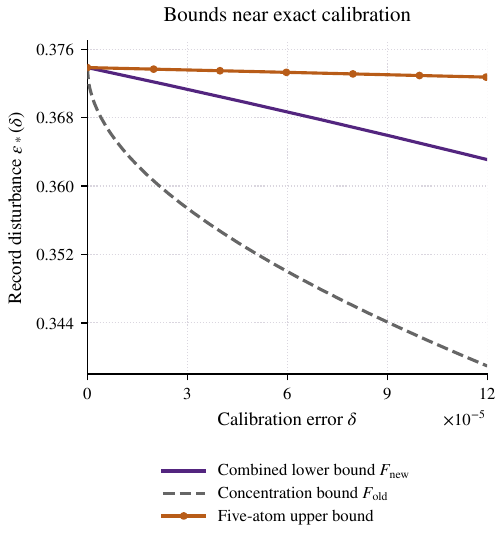}
\caption{Bounds near exact calibration at $\Gamma=0.4$. The combined lower bound $F_{\rm new}$ uses the harmonic-mean estimate to improve the concentration bound $F_{\rm old}$. The five-atom curve is an achieved upper bound, evaluated at its actual calibration error. The optimum $\epstar(\delta)$ lies between the combined lower and constructive upper curves.}
\label{fig:bounds}
\end{figure}
\begin{figure}[tbp]
\centering\includegraphics[width=\columnwidth]{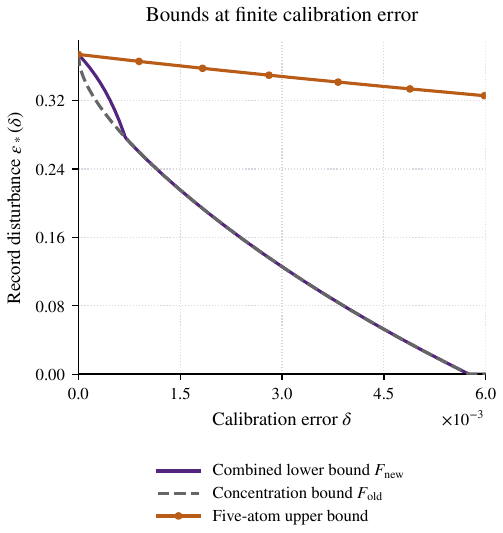}
\caption{Bounds over a wider calibration-error range at $\Gamma=0.4$, with the same curve definitions as \cref{fig:bounds}. Farther from zero, the combined and concentration lower bounds coincide; the grey dashes overlay the purple curve there. The five-atom curve supplies a constructive upper bound throughout the plotted range.}
\label{fig:bounds-finite}
\end{figure}

For exact preservation at a larger error, consider constant-total responses. The cosine family
\begin{equation}
\mu^{\cos}_{x,a}(\dd\phi)=\frac{1+a\cos(\phi-x)}{4\pi}\dd\phi
\end{equation}
has $z_{xy}=1/(2\pi)$, uniform marginals, and correlation $E(d)=-\tfrac12e^{-\Gamma}\cos d$. It therefore has $\eps=0$ and
$\delta_{\cos}=\tfrac12(\sech\Gamma-\tfrac12e^{-\Gamma})$, approximately $0.2949237144$ at the benchmark.

A better preserving example uses half-circle outcomes:
\begin{equation}
\mu^{\rm half}_{x,a}(\dd\phi)=\frac{1+a\sgn\cos(\phi-x)}{4\pi}\dd\phi.
\label{eq:half-circle}
\end{equation}
Their total is independent of $x$, and Fourier integration gives
\begin{equation}
E_{\rm half}(d)=-\frac8{\pi^2}\sum_{\substack{n\ge1\\n\ \mathrm{odd}}}
\frac{e^{-n\Gamma}\cos(nd)}{n^2}.
\label{eq:half-series}
\end{equation}
All first-record marginals are preserved. On the four-setting alphabet at $\Gamma=0.4$, its error is approximately $0.1743750480$. Calibration also gives $S_{\rm CHSH}\ge2\sqrt2\sech\Gamma-8\delta$, since each of the four signed correlation terms changes by at most $2\delta$. At $\Gamma=0.4$ and $\delta=0.005$, this remains above $2.5763$. Thus the robust obstruction has a calibration regime that still guarantees Bell violation, although its proof does not use that violation.

These constant-total models have a setting-independent hidden arrival distribution and local outcome responses, and are Bell-local. Their performance does not locate the least error permitting preservation.

The lower bounds require only two orthogonal settings and their diagonals, so extend to larger calibrated alphabets. The smoothing and constant-total formulas hold for continuous angles as well. The five-atom sharp coefficient below, however, is tied to the specified finite alphabet; extending its calibration test to every angle changes the optimization problem.

\section{The sharp coefficient at the benchmark width}
\label{sec:sharp}
The linear order leaves a substantial coefficient gap. At $\Gamma=0.4$, an explicit five-atom perturbation already improves the smoothing coefficient from approximately $2.8254$ to more than $9.3508$, but a restricted numerical search alone cannot prove optimality over arbitrary detector measures. We close that gap by a uniform concentration argument, a first-order symmetry reduction, and matching positive-measure and polynomial certificates.

\begin{theorem}[Sharp benchmark expansion]\label{thm:sharp}
For (D1)--(D4), the alphabet \cref{eq:alphabet}, and fixed $\Gamma=0.4$,
\begin{equation}
\begin{aligned}
\epstar(\delta)&=\epszero-\Csharp\delta+O(\delta^2),\\
\epszero&=\frac1{2\cosh(0.8)},\qquad\delta\downarrow0.
\end{aligned}
\label{eq:sharp}
\end{equation}
where $\Csharp=\lambda_0+\lambda_1$ is defined by the unique root in the explicit interval box of \cref{app:certificate}, and
\begin{equation}
\begin{gathered}
9.35109177105700958345410586084\\
{}<\Csharp<{}\\
9.35109177105700958345410586086.
\end{gathered}
\label{eq:sharp-interval}
\end{equation}
The lower estimate covers arbitrary finite nonnegative terminal measures, asymmetric outcomes, arbitrary positive setting scales, and error-dependent families. A nonnegative five-atom family attains the same coefficient.
\end{theorem}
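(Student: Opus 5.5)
The plan is to prove matching upper and lower first-order bounds around the exact endpoint of \cref{thm:exact}. For the \emph{upper bound} I would write down the five-atom family explicitly. For each setting $x\in\mathcal X$ and outcome $a$, keep the canonical atom at $\theta(x,a)$ with mass $\tfrac12(1-O(\delta))$, and move a mass of order $\delta$ onto a few auxiliary angles, placed symmetrically so that the totals stay antipodally balanced. Then I would compute $\Delta(\mu)$ and $\eps(\mu,M)$ to first order in the perturbation size, optimizing over the equal or near-equal scales $M_x$. The key property to check is that the first-order gain in the worst feedback bias, over all $16$ policies, four first settings and four references $y$, is $\Csharp$ per unit of calibration error. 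By \cref{eq:exact-feedback}, the binding policies at $\delta=0$ are the diagonal/orthogonal ones with $t=1$, so only finitely many active constraints need first-order control. Second-order remainders are explicit rational functions of the atom masses and give the $O(\delta^2)$ term.

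For the \emph{lower bound} I would proceed in three steps. First, a uniform localization: by \cref{lem:defect} and the concentration estimate \cref{eq:concentration}, any family with $\Delta(\mu)\le\delta$ has, at each setting, outcome masses within $O(\delta)$ of $\tfrac12$ and aligned angular spread $1-|v_x|=O(\delta)$. The mass lying at distance $\ge s$ from $\vartheta_x$ is $O(\delta)$. By the rigidity discussion after \cref{thm:rigidity}, the centers $\vartheta_x$ are within $O(\sqrt\delta)$ of a common rotation/reflection of the canonical orientation; I would mod out this symmetry. The response is then written as a canonical atom plus a signed perturbation whose positive part is an arbitrary nonnegative measure of total mass $O(\delta)$. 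Second, linearization: $W^{ab}_{xy}$ is bilinear in the responses, so each calibration table entry and each feedback probability \cref{eq:feedback-general} equals its exact value plus a linear functional of the perturbation measures (integrals of the kernel $S_\Gamma$ against one canonical atom), plus a quadratic remainder bounded by (mass)$^2=O(\delta^2)$. The orientation and scale deviations enter similarly, with the scale ratios $t$ as free parameters near $1$.

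The third step is the core. This yields a semi-infinite linear program: minimize the first-order change in the worst bias over nonnegative measures, subject to linearized TV constraints $\le\delta$. I would use the symmetry of the alphabet (the dihedral action and the outcome flips) to average any feasible perturbation into a symmetric one without worsening the objective or constraints, since both are max/convex in the perturbation. That reduces to two dual multipliers, $\lambda_0$ for the calibration constraints and $\lambda_1$ for the feedback balance. Weak duality then requires that a single explicit trigonometric-rational function of the perturbation angle $\phi$ be nonnegative on all of $\T$. Clearing the positive denominators $c\pm\cos(\phi-\cdot)$ gives a trigonometric polynomial, and nonnegativity is to be certified by an exact or interval computation (sum-of-squares or Sturm-type root isolation in $\cos\phi,\sin\phi$). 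The multipliers are fixed by complementary slackness at the five support angles of the primal family. This gives the polynomial system whose unique root in the box of \cref{app:certificate} defines $\lambda_0+\lambda_1=\Csharp$. Since the primal five-atom value and this dual bound coincide, the coefficient is sharp.

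The main obstacle is making the lower bound \emph{uniform}. Perturbation mass is small but may sit anywhere, and the nonlinear terms (the TV absolute values, the ratio in \cref{eq:feedback-general}, and the max over policies) must be shown to contribute only $O(\delta^2)$ uniformly over all measures, including error-dependent families and unequal scales. I would first handle the scales by showing that the optimal $t$ deviates from $1$ by $O(\delta)$, so the scale deviation enters linearly. The far-away mass I would absorb using the $L^\infty$ bounds $\ell\le S_\Gamma\le u$ of \cref{eq:extrema}. The delicate point is the $O(\sqrt\delta)$ orientation drift: it must enter the objective only quadratically. I would verify this by differentiating the calibrated cosines $\cos(\vartheta_x-\vartheta_y)$ at the canonical configuration, where the TV constraints force the first-order orientation term to vanish.
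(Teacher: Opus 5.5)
Your overall architecture (localize, linearize the bilinear tables, symmetrize under the dihedral group, reduce to a two-constraint positive moment problem, certify a double-contact polynomial dual, attain with five atoms) matches the paper's. However, the step that is meant to make the lower bound uniform fails as written. You write each localized response as the canonical atom plus a perturbation of total mass $O(\delta)$, and bound the nonlinear remainder by (mass)$^2$. That decomposition is false for the class in the theorem. \cref{lem:defect} controls only $\E(1-\cos\alpha)$, so a calibrated family may spread all of each response over an arc of width of order $\sqrt\delta$; an even spread of this kind has $\Delta=O(\delta)$. Even an $O(\delta)$ displacement of an atom is already a perturbation of total variation of order one. For such $\xi_{x,a}=\mu_{x,a}-\tfrac12\delta_{\theta(x,a)}$, a total-variation estimate bounds the remainder $\iint S_\Gamma\,\xi_{x,a}\xi_{y,b}$ only by $O(1)$, and its cross terms with the $O(\delta)$ far tails only by $O(\delta)$. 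Those far tails carry the first-order information, so they must stay in the linear functional rather than be absorbed via $\ell\le S_\Gamma\le u$. Moreover, \cref{lem:defect} gives only $|m_x|=O(\sqrt\delta)$; the $O(\delta)$ mass bound needs the balance between the $(+,+)$ and $(-,-)$ diagonal entries, \cref{eq:mass-Odelta}. The paper's remedy is to measure $\xi$ in the dual $C^2$ norm $\norm{\xi}_{-2}$. The periodic Taylor inequality \cref{eq:periodic-taylor} bounds this norm by the mass error, the sine moment and the $(1-\cos)$ moment. Since $S_\Gamma$ has bounded mixed derivatives through order $(2,2)$, this gives $\iint S_\Gamma\xi\xi=O(\delta^2)$ uniformly, including for narrow diffuse error-dependent responses.

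That remedy requires the sine moments to be $O(\delta)$, and here your orientation argument falls short. It is not enough that the drift enters the objective \emph{only quadratically}: an $O(\sqrt\delta)$ drift entering quadratically is still an $O(\delta)$, i.e.\ first-order, error that would change the coefficient. You must show that calibration forces the drift itself to be $O(\delta)$ modulo one common rotation. This is needed for all eight outcome shifts $u_{x,a}$, including a non-antipodal splitting of the two outcomes at one setting; the defect sees that splitting only through $1-\cos$ of it, and the setting centres $\vartheta_x$ do not record it. The paper obtains this from \cref{eq:Jacobian}: the normalized-table Jacobian in the eight shifts has rank seven. Each orthogonal pair forces one common shift of its four outcome atoms, and an oblique pair ties the two orthogonal pairs together. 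Calibration then gives $u_{x,a}=u_{\rm com}+O(\delta)$, and one further rotation yields \cref{lem:gauge}. With that gauge in place, your convexity-based symmetrization and the moment program go through. One correction there: both multipliers belong to calibration constraints, $\lambda_0$ to the diagonal correlation error $\int a_0\dd\nu$ and $\lambda_1$ to the oblique one $\int a_1\dd\nu$ (the orthogonal error vanishes for symmetric families). The feedback disturbance enters only as the objective $\int g\dd\nu$.
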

The proof combines a uniform reduction over detector measures with matching lower and constructive upper bounds. Directed interval arithmetic certifies the root and continuous polynomial inequalities required for the sharp coefficient.

\subsection{A uniform gauge for nearly calibrated measures}
Let $\mu^0_{x,a}=\frac12\delta_{\theta(x,a)}$. For a signed measure $\xi$ on $\T$, define
\begin{equation}
\norm{\xi}_{-2}=\sup_{\norm{f}_{C^2}\le1}\left|\int f\dd\xi\right|,
\end{equation}
where the norm bounds $f$ and its first two derivatives. Constants below may depend on the fixed width and finite alphabet, but not on the detector family.

\begin{lemma}[Uniform rotation gauge]\label{lem:gauge}
For every sufficiently small calibrated family, a common physical rotation and, if needed, a common reflection give
\begin{equation}
\max_{x,a}\norm{\mu_{x,a}-\mu^0_{x,a}}_{-2}=O_\Gamma(\delta).
\label{eq:gauge}
\end{equation}
Both transformations preserve every fixed-setting and feedback probability exactly.
\end{lemma}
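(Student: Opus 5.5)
The plan is to turn the quantitative rigidity of \cref{lem:defect} into a linear $C^{-2}$-distance estimate, and to turn the calibrated cross-setting correlations into an $O(\delta)$ control of the orientations $\vartheta_x=\arg v_x$.

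\emph{Invariance.} A common rotation $\phi\mapsto\phi+\omega$ or reflection $\phi\mapsto-\phi$ applied to every $\mu_{x,a}$ leaves $S_\Gamma(\phi,\psi)$ unchanged, because the kernel depends only on $\cos(\phi-\psi)$. Every $A^{ab}_{xy}$, every $z_{xy}$ and every branch weight in \cref{eq:feedback-general} is therefore unchanged. This settles the last sentence of the statement.

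\emph{Per-setting concentration at linear order.} By \cref{lem:defect}, $m_x^2\le\beta=O(\delta)$ and $1-|v_x|^2\le(c+1)\beta=O(\delta)$. The second bound controls $\int(1-\cos(\alpha-\vartheta_x))\,\nu_x(\dd\alpha)=1-|v_x|=O(\delta)$ for the aligned measure $\nu_x$. Let $f$ satisfy $\norm{f}_{C^2}\le1$ and put $g(\alpha)=f(\alpha)$ on the positive part and $f(\alpha+\pi)$ on the shifted negative part. Taylor expansion about $\vartheta_x$ gives
\begin{equation*}
\left|\int g\,\dd\nu_x-g(\vartheta_x)-g'(\vartheta_x)\int\sin(\alpha-\vartheta_x)\,\nu_x(\dd\alpha)\right|\le C\int(1-\cos(\alpha-\vartheta_x))\,\nu_x(\dd\alpha).
\end{equation*}
Here $\int\sin(\alpha-\vartheta_x)\,\nu_x(\dd\alpha)=0$ by the choice of $\vartheta_x$, so the first-order term vanishes. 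This second-order vanishing is exactly why the $C^{-2}$ norm yields $O(\delta)$ rather than $O(\sqrt\delta)$.

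\emph{Splitting by outcome.} The bound above controls the aligned sum $\mu_{x,+}+(\text{shifted }\mu_{x,-})$. To reach each outcome separately, bound the positive and negative pieces individually using their own masses $p_{x,\pm}=\tfrac12+O(\sqrt\beta)$. The hard point is that the mass deviation must be $O(\delta)$, not $O(\sqrt\delta)$. I would try to get this from the fixed-marginal calibration $|p_{xy}-1/2|\le\delta$ combined with near-constancy of the conditional partition weights. Concretely, each $A^{+b}_{xy}$ equals $p_{x,+}$ times an expectation of $S_\Gamma$ that is determined to $O(\delta)$ by the concentration estimate. A marginal deviation of order $m_x$ would then appear linearly in $P_{xy}$, giving $|m_x|=O(\delta)$. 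Each conditional measure is concentrated near $\vartheta_x$ or $\vartheta_x+\pi$ in the same quadratic sense, and the Taylor argument applies to each one. This yields $\norm{\mu_{x,a}-\tfrac12\delta_{\theta'(x,a)}}_{-2}=O(\delta)$, where $\theta'(x,a)=\vartheta_x+\tfrac\pi2(1-a)$.

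\emph{Orientation step (expected main obstacle).} It remains to show that the $\vartheta_x$ lie within $O(\delta)$ of a common rotation or reflection of the canonical angles $x$. Replacing each measure by its atom changes every $P_{xy}$ by $O(\delta)$, since $S_\Gamma$ is smooth and its bilinear form is $C^{-2}$-continuous in each slot. Calibration then gives $|\cos(\vartheta_x-\vartheta_y)-\cos(x-y)|=O(\delta)$ for all pairs in $\mathcal X$. The difficulty is that $\cos$ has vanishing derivative at separations $0$ and $\pi$, which would only yield $O(\sqrt\delta)$. The alphabet avoids this, since its separations are $\pi/4$, $\pi/2$ and $3\pi/4$, where $|\sin|$ is bounded below. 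Fix $\vartheta_0$ by rotating it to $0$. Each $\vartheta_y$ is then determined up to sign with $O(\delta)$ accuracy by $\cos\vartheta_y$. The relative signs are forced by the remaining pairwise constraints, whose discrete solutions are separated by a fixed gap, as in the finite orientation enumeration. One reflection then fixes the global sign. The final $O(\delta)$ rotation needed to put each atom exactly at $\theta(x,a)$ costs $O(\delta)$ in $C^{-2}$ norm, since a shift of an atom by $h$ changes $\int f$ by at most $|h|$.
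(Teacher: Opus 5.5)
Your invariance argument is correct. Your orientation step, which uses that the alphabet's nonzero separations $\pi/4,\pi/2,3\pi/4$ have $|\sin|$ bounded below, would work if every outcome measure were already $O(\delta)$-close to an atom at $\vartheta_x+\frac\pi2(1-a)$. The genuine gap is the step \emph{Splitting by outcome}.

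The choice $\vartheta_x=\arg v_x$ kills only the mass-weighted aligned sine moment, $p_{x,+}u_{x,+}+p_{x,-}u_{x,-}=0$. Here $u_{x,a}$ is the conditional sine moment of outcome $a$ about $\vartheta_x+\frac\pi2(1-a)$. Your piecewise $g$ is not one function on $\T$. Its true first-order term is $f'(\vartheta_x)p_{x,+}u_{x,+}+f'(\vartheta_x+\pi)p_{x,-}u_{x,-}$, which does not vanish for general $f$.

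The defect bound gives each $u_{x,a}$ only as $O(\sqrt\delta)$, and per-setting data cannot do better. Take $\mu_{x,+}=\frac12\delta_{\vartheta_x+\epsilon}$ and $\mu_{x,-}=\frac12\delta_{\vartheta_x+\pi-\epsilon}$ with $\epsilon\asymp\sqrt\delta$.
\begin{itemize}
\item It has $1-|v_x|=1-\cos\epsilon=O(\delta)$.
\item It changes the diagonal table only at order $\epsilon^2$: the self terms stay exactly at $\ell$ and the cross term is $u-O(\epsilon^2)$.
\item Yet each outcome lies at $\norm{\cdot}_{-2}$-distance of order $\epsilon$ from its atom.
\end{itemize}
So replacing measures by atoms perturbs $P_{xy}$ by $O(\sqrt\delta)$, not $O(\delta)$. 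Your orientation step then also returns only $O(\sqrt\delta)$.

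Such within-setting relative rotations are excluded only by cross-setting information at first order. Against an ideal orthogonal partner, the configuration above gives the partner's marginal exactly $\frac12-\sin\epsilon/(2c)$, so calibration forces $\epsilon=O(\delta)$.

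The paper captures this with a bootstrap that your proposal lacks:
\begin{enumerate}
\item It first locates all orientations to $O(\sqrt\delta)$ via the cosine Gram matrix.
\item It then keeps \emph{eight} independent endpoint shifts $u_{x,a}$, not four.
\item It linearizes the full normalized table with the periodic Taylor inequality, $P-P^0=\mathsf Ju+O(\delta)$, absorbing products of the $O(\sqrt\delta)$ shifts into the remainder.
\item It proves $\ker\mathsf J$ is exactly the common rotation. Within an orthogonal pair the four outcome shifts must agree (the $k=0$ consistency argument), and an oblique pair ties the two orthogonal pairs together.
\end{enumerate}
Calibration then gives $u_{x,a}=u_{\rm com}+O(\delta)$. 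After one more rotation, \cref{eq:gauge-explicit-bound} yields the $O(\delta)$ bound.

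A minor point on your mass step: it is valid only if you use the diagonal pair $y=x$. There, extremality of the kernel at separations $0$ and $\pi$ makes the conditional kernel expectations $O(\delta)$-determined. For $y\neq x$ they carry the same $O(\sqrt\delta)$ first-order shift terms.
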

\begin{proof}
The defect bound initially gives $m_x^2=O(\delta)$ and $1-|v_x|^2=O(\delta)$. Diagonal outcome balance improves the mass estimate. Write
$A_{\rm exc}=A-\ell(p_+^2+p_-^2)\ge0$. The defect decomposition gives $A_{\rm exc}\le\ell\beta$. Since the target probabilities for $++$ and $--$ agree, calibration implies
\begin{equation}
\begin{aligned}
|A^{++}_{xx}-A^{--}_{xx}|&\le2z_{xx}\delta,\\
|m_x|&\le\frac{2u\delta}{\ell}+\beta=O(\delta).
\end{aligned}
\label{eq:mass-Odelta}
\end{equation}
The second inequality uses $A^{++}_{xx}-A^{--}_{xx}=\ell m_x$ plus a difference of nonnegative self-excesses whose sum is $A_{\rm exc}$, together with $z_{xx}\le u$.

Let $\vartheta_x=\arg v_x$. The aligned probability measure has mean squared chord $2(1-|v_x|)=O(\delta)$ about $\vartheta_x$. Because each outcome mass is $1/2+O(\delta)$, each conditional outcome concentrates to the same order around $\vartheta_x+\pi(1-a)/2$.

Lipschitz continuity of the strictly positive smooth kernel makes the normalized source table $O(\sqrt\delta)$ close to the ideal endpoint table at these orientations. Consequently their cosine Gram matrix differs from the canonical one by $O(\sqrt\delta)$. Rotate $\vartheta_0$ to zero. The orthogonal correlation puts $\vartheta_{\pi/2}$ within $O(\sqrt\delta)$ of either $\pi/2$ or $-\pi/2$; choose a common reflection. Correlations with these two independent axes locate the remaining two directions within $O(\sqrt\delta)$ of the canonical ones.

For each conditional outcome, express its angle relative to the canonical endpoint as $\alpha$. Let $u_{x,a}=\E\sin\alpha$ and $q_{x,a}=\E(1-\cos\alpha)$. At this stage $u_{x,a}=O(\sqrt\delta)$ and $q_{x,a}=O(\delta)$. The periodic Taylor inequality
\begin{equation}
|f(\theta+\alpha)-f(\theta)-f'(\theta)\sin\alpha|
\le K\norm{f}_{C^2}(1-\cos\alpha)
\label{eq:periodic-taylor}
\end{equation}
holds on the entire circle. An explicit admissible choice is $K=\pi^2/2$: for the representative $|\alpha|\le\pi$, Taylor's remainder and $|\alpha-\sin\alpha|\le\alpha^2/2$ give an upper bound $\norm{f}_{C^2}\alpha^2$, and $\alpha^2\le(\pi^2/2)(1-\cos\alpha)$. Thus
\begin{equation}
P-P^0=\mathsf J u+O_\Gamma(\delta),
\label{eq:Jacobian}
\end{equation}
where $\mathsf J$ is the normalized source-table Jacobian for the eight independent endpoint shifts. Products of the initial $O(\sqrt\delta)$ shifts are absorbed into the remainder.

The kernel of $\mathsf J$ is precisely a common rotation. For an orthogonal setting pair, a zero normalized derivative requires $ab(u_{x,a}-u_{y,b})$ to equal one constant $k$ for all outcomes. The four equations are
\begin{align}
u_{x,+}-u_{y,+}&=k,&u_{x,+}-u_{y,-}&=-k,\\
u_{x,-}-u_{y,+}&=-k,&u_{x,-}-u_{y,-}&=k.
\end{align}
Their consistency forces $k=0$, so all four shifts agree. Each orthogonal pair therefore has one common shift. An oblique cross-pair has a nonzero angular derivative and makes the two common shifts equal. Hence $\mathsf J$ has rank seven. On its finite-dimensional complement, \cref{eq:Jacobian} and calibration give $u_{x,a}=u_{\rm com}+O(\delta)$.

Rotate once more by $-u_{\rm com}=O(\sqrt\delta)$. The remaining sine moments are $O(\delta)$, their squared chords remain $O(\delta)$, and the outcome-mass errors are $O(\delta)$. More explicitly, if $p_{x,a}$ is the outcome mass and $u_{x,a},q_{x,a}$ are the resulting conditional moments, then
\begin{equation}
\begin{aligned}
&\norm{\mu_{x,a}-\tfrac12\delta_{\theta(x,a)}}_{-2}\\
&\quad\le |p_{x,a}-\tfrac12|+p_{x,a}|u_{x,a}|+
\frac{\pi^2}{2}p_{x,a}q_{x,a}\\
&\quad=O(\delta).
\end{aligned}
\label{eq:gauge-explicit-bound}
\end{equation}
This proves \cref{eq:gauge} without assuming a bounded density or a differentiable detector family.
\end{proof}

Uniformity matters here. If $\xi_{x,a}=\mu_{x,a}-\mu^0_{x,a}$ after this gauge, then
\begin{equation}
\iint S_\Gamma(\phi,\psi)\,\xi_{x,a}(\dd\phi)\xi_{y,b}(\dd\psi)=O_\Gamma(\delta^2).
\label{eq:bilinear-remainder}
\end{equation}
The mixed derivatives through order $(2,2)$ of $S_\Gamma$ are bounded at fixed positive width. Applying the dual norm successively in each variable proves the estimate. Since every $z_{xy}\ge\ell>0$, normalized probabilities have uniform quadratic remainders as well. This covers narrow error-dependent distributions with angular width $O(\sqrt\delta)$, not only differentiable finite-parameter families.

\subsection{Validity of first-order symmetry reduction}
Write $x^\perp=x+\pi/2$ modulo $\pi$, with the corresponding outcome relabeling when necessary, and define
\begin{equation}
\sigma_x=\tfrac12(\log M_x-\log M_{x^\perp}).
\end{equation}
Only competitors with $\eps\le\epszero$ need be considered for a lower bound. For their orthogonal feedback comparisons, the log odds are
\begin{equation}
\begin{aligned}
\log O_x^+&=\log\rho+2\sigma_x+O(\delta),\\
\log O_x^-&=-\log\rho-2\sigma_x+O(\delta).
\end{aligned}
\end{equation}
The disturbance restriction and $p_{xx}=1/2+O(\delta)$ bound $\sigma_x$ above by $O(\delta)$. Repeating at $x^\perp$ bounds it below, giving $\sigma_x=O(\delta)$. The common scales of the two orthogonal pairs need not approach each other; only their within-pair ratios enter the witnesses used below.

Let $\mathcal G$ be the dihedral group of the eight endpoints $k\pi/4$, acting by physical rotation through $\pi/4$, reflection, and the induced setting/output relabelings. The ideal family and target tables are invariant. Averaging the normalized shapes over $\mathcal G$ produces a positive family of the form
\begin{equation}
\bar\mu_{x,a}=\tfrac12\tau_{\theta(x,a)}\nu,
\label{eq:symfamily}
\end{equation}
where $\nu$ is an even probability measure on $\T$ and $\tau$ denotes translation. The group average of the vector $\sigma$ vanishes. Equivariance, convexity of total variation, and the uniform remainder imply
\begin{equation}
\Delta(\bar\mu)\le\Delta(\mu)+O(\delta^2).
\label{eq:symcal}
\end{equation}
This is a first-order statement. Independently averaging responses at two detectors does not exactly average their finite-error source tables.

To control the objective without assuming its differentiability, use $f_x^+=(+\mapsto x,-\mapsto x^\perp)$ and its reverse $f_x^-$. Define eight signed witnesses and their average:
\begin{equation}
\begin{aligned}
C_x^+&=p_x^{f_x^+}(+)-p_{xx},\\
C_x^-&=p_{xx}-p_x^{f_x^-}(+),\\
w(\mu,\sigma)&=\tfrac18\sum_{x\in\mathcal X}(C_x^++C_x^-).
\end{aligned}
\end{equation}
Each signed witness is bounded above by $\eps$, so $\eps\ge w$. The fixed marginal cancels exactly from their sum:
\begin{equation}
w=\tfrac18\sum_x\left[p_x^{f_x^+}(+)-p_x^{f_x^-}(+)\right].
\label{eq:w-cancel}
\end{equation}
An explicit expansion verifies the symmetry step. Define
\begin{equation}
J_{\rm d}=\frac{C_\Gamma c}{c^2-1},\qquad J_{\rm o}=\frac{C_\Gamma}{c},\qquad
h_\rho=\frac{\rho}{(1+\rho)^2}.
\end{equation}
These are the ideal diagonal and orthogonal partition shapes and the logistic derivative at either ideal feedback odds. Let $s_{xy}^a=\sum_b A^{ab}_{xy}$. The two exact log odds are
\begin{align}
\log O_x^+&=2\sigma_x+\log s_{xx}^+-\log s_{x,x^\perp}^-,\\
\log O_x^-&=-2\sigma_x+\log s_{x,x^\perp}^+-\log s_{xx}^-.
\end{align}
At the ideal family, $s_{xx}^a=J_{\rm d}/2$ and $s_{x,x^\perp}^a=J_{\rm o}/2$. Expanding the logarithms and the logistic map, with $\sigma_x=O(\delta)$ and the uniform shape estimates, yields
\begin{equation}
\begin{aligned}
w(\mu,\sigma)&=\epszero+\frac{h_\rho}{4}\sum_x
\biggl[\frac{z_{xx}-J_{\rm d}}{J_{\rm d}}\\
&\qquad-\frac{z_{x,x^\perp}-J_{\rm o}}{J_{\rm o}}\biggr]+O(\delta^2).
\end{aligned}
\label{eq:w-explicit-expansion}
\end{equation}
The first-order scale term is $(h_\rho/2)\sum_x\sigma_x=0$, since $\sigma_{x^\perp}=-\sigma_x$. The outcome-specific branch perturbations combine into the total partition perturbations displayed here. No equality between the baseline scales of different orthogonal pairs is used.

The linear part of the bracketed sum is invariant under the finite group: the group permutes diagonal pairs and orthogonal pairs, and output relabeling leaves each total partition unchanged. Replacing the shapes by their group average therefore preserves this linear part. The bilinear estimate \cref{eq:bilinear-remainder} bounds the remaining difference uniformly, giving
\begin{equation}
w(\mu,\sigma)=w(\bar\mu,0)+O(\delta^2).
\end{equation}
Similarly, linearizing the normalized source tables gives
\begin{equation}
P(\bar\mu)-P^0=\frac1{|\mathcal G|}\sum_{g\in\mathcal G}
 g\mathbin{\cdot}[P(\mu)-P^0]+O(\delta^2),
\end{equation}
where $g\mathbin{\cdot}$ includes the induced table relabeling. Convexity of total variation proves \cref{eq:symcal}. This supplies an explicit justification of first-order averaging; exact finite-error equality is not asserted.

For \cref{eq:symfamily}, fixed marginals are exactly $1/2$, and the strict ideal partition ordering at separations $0,\pi/4,\pi/2$ persists nearby. The largest equal-scale feedback disturbance is precisely the diagonal-to-orthogonal one. Hence
\begin{equation}
\eps(\mu,M)\ge\eps(\bar\mu,\bm1)-O(\delta^2).
\label{eq:full-reduction}
\end{equation}
No symmetry assumption has been imposed on the original detector.

\subsection{Reduction to a positive moment problem}
For a symmetric family set
\begin{equation}
\begin{aligned}
K_\nu(d)&=\iint S_\Gamma(d+\alpha-\beta)\nu(\dd\alpha)\nu(\dd\beta),\\
S_\Gamma(d)&=\frac{C_\Gamma}{c+\cos d}.
\end{aligned}
\end{equation}
Its linear change from the ideal value is
\begin{equation}
\int[S_\Gamma(d+\alpha)+S_\Gamma(d-\alpha)-2S_\Gamma(d)]\nu(\dd\alpha),
\end{equation}
with a uniform $O(\delta^2)$ remainder. Let $z=\cos\alpha$, $h=c^2$, and $b=h-1/2$. Differentiating the diagonal correlation error, oblique correlation error, and extreme partition ratio yields
\begin{align}
a_0(z)&=\frac{h-1}{c}\frac{1-z}{h-z^2},\label{eq:a0}\\
a_1(z)&=\frac{b}{\sqrt2c}\frac{(1-z)(b-z-z^2)}{b^2-z^2+z^4},\label{eq:a1}\\
g(z)&=\frac{2c^2(c^2-1)(1-z^2)}{(2c^2-1)(c^2-z^2)(c^2-1+z^2)}.
\label{eq:g}
\end{align}
Integrals below use the pushforward of $\nu$ under cosine. Then
\begin{align}
\Delta(\bar\mu)&=\max\left\{\int a_0\dd\nu,\left|\int a_1\dd\nu\right|\right\}+O(\delta^2),\label{eq:moment-error}\\
\eps(\bar\mu,\bm1)&=\epszero-\int g\dd\nu+O(\delta^2).
\label{eq:moment-objective}
\end{align}
Here $a_0\ge0$. The orthogonal correlation vanishes exactly; the opposite oblique orientation supplies the same absolute error. \Cref{app:derivatives} gives a direct derivative recipe for these expressions.

Introduce the continuous ratios, extended at $z=1$,
\begin{align}
A(z)=\frac{a_1(z)}{a_0(z)}
&=\frac{b}{\sqrt2(h-1)}\frac{(h-z^2)(b-z-z^2)}{b^2-z^2+z^4},\\
G(z)=\frac{g(z)}{a_0(z)}
&=\frac{2c^3(1+z)}{(2h-1)(h-1+z^2)}.
\label{eq:AG}
\end{align}
An upper bound on the coefficient follows from the positive-measure optimization
\begin{equation}
\begin{aligned}
\sup_{\eta\ge0}\biggl\{\int G\dd\eta:\ &\int\dd\eta\le1,\quad\int A\dd\eta\le1,\\
&z\in[-1,1]\biggr\}.
\end{aligned}
\label{eq:LP}
\end{equation}
The original absolute oblique constraint also gives $\int A\dd\eta\ge-1$. Dropping that lower constraint makes \cref{eq:LP} a relaxation, which is sufficient for the upper bound on possible improvement. The matching construction below satisfies $\int A\dd\eta=1$, so the relaxation loses nothing at the benchmark optimum.

It suffices to find $\lambda_0,\lambda_1>0$ with
\begin{equation}
G(z)\le\lambda_0+\lambda_1A(z)\quad(-1\le z\le1)
\label{eq:dual}
\end{equation}
and a positive measure supported at equality points that saturates both constraints. Their matching values give the coefficient without requiring an abstract strong-duality theorem.

\subsection{Continuous polynomial certification of the dual bound}
Define
\begin{equation}
\begin{aligned}
D(z)&=(h-1+z^2)(b^2-z^2+z^4),\\
\mathcal P(z)&=\lambda_0D(z)+\lambda_1E(z)-N(z),\\
E(z)&=\frac{b}{\sqrt2(h-1)}(h-z^2)(b-z-z^2)\\
&\qquad\times(h-1+z^2),\\
N(z)&=\frac{2c^3}{2h-1}(1+z)(b^2-z^2+z^4).
\end{aligned}
\label{eq:poly}
\end{equation}
The denominator is strictly positive because
$b^2-z^2+z^4=(z^2-1/2)^2+h(h-1)>0$. Thus
$\mathcal P/D=\lambda_0+\lambda_1A-G$.

At $\Gamma=0.4$, the four unknowns are defined by the double-contact equations
\begin{equation}
\mathcal P(r_1)=\mathcal P'(r_1)=\mathcal P(r_2)=\mathcal P'(r_2)=0.
\label{eq:root-system}
\end{equation}
The explicit interval box in \cref{app:certificate} contains one unique root, approximately
\begin{equation}
\begin{aligned}
\lambda_0&=8.6124493990175556,\\
\lambda_1&=0.7386423720394540,\\
r_1&=0.7187943057772323,\\
r_2&=0.1301153680796272.
\end{aligned}
\end{equation}
The interval verification uses 90-decimal-digit interval arithmetic and a box radius $10^{-45}$. Its preconditioned fixed-point map lies strictly within the box and has derivative infinity norm less than $3.614\times10^{-43}$. This validates existence and uniqueness independently of a floating-point root solver's stopping criterion.

The distinct exact double roots imply exact factorization
\begin{equation}
\mathcal P(z)=(z-r_1)^2(z-r_2)^2(q_2z^2+q_1z+q_0).
\label{eq:factorization}
\end{equation}
Interval polynomial division gives
\begin{equation}
\begin{aligned}
q_0&\simeq8.90059792472045,\\
q_1&\simeq18.31788401745949,\\
q_2&\simeq10.68259946195851.
\end{aligned}
\end{equation}
with $q_2>0$ and $q_1^2-4q_0q_2<-44.7812$. Therefore the quadratic is positive on the real line, and $\mathcal P(z)\ge0$ everywhere. The double-root equations, not merely a small numerical remainder, establish exact divisibility.

Consequently,
\begin{equation}
g(z)\le\lambda_0a_0(z)+\lambda_1a_1(z),\qquad -1\le z\le1.
\label{eq:moment-dual}
\end{equation}
Integrate, use the positive multipliers and \cref{eq:symcal,eq:moment-error,eq:full-reduction}, and obtain the full-class lower estimate
\begin{equation}
\eps\ge\epszero-(\lambda_0+\lambda_1)\delta-O(\delta^2).
\label{eq:sharp-lower}
\end{equation}

\Cref{fig:dual-certificate} shows the continuously certified slack and its two equality points.
\begin{figure}[tbp]
\centering\includegraphics[width=\columnwidth]{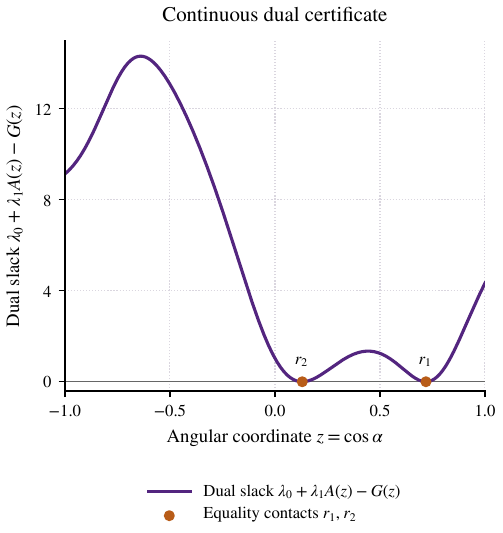}
\caption{Continuous dual certificate at $\Gamma=0.4$. The slack $\lambda_0+\lambda_1A(z)-G(z)$ is nonnegative for every $z\in[-1,1]$, as proved by the exact double-root factorization and interval bounds on the quadratic factor. The marked contacts $r_1$ and $r_2$ attain equality.}
\label{fig:dual-certificate}
\end{figure}

\subsection{Attainment of the sharp coefficient by a five-atom detector}
At the contacts, $A(r_1)\simeq-5.2451097961$ and $A(r_2)\simeq3.9113293492$. Define
\begin{equation}
\eta_1=\frac{1-A(r_2)}{A(r_1)-A(r_2)},\qquad\eta_2=1-\eta_1.
\end{equation}
Both are interval-certified positive, approximately $0.3179543164$ and $0.6820456836$. They satisfy $\eta_1+\eta_2=1$ and $\eta_1A(r_1)+\eta_2A(r_2)=1$. Contact equality gives
\begin{equation}
\eta_1G(r_1)+\eta_2G(r_2)=\lambda_0+\lambda_1=\Csharp.
\end{equation}
Let $\alpha_j=\arccos r_j$ and $w_j=\eta_j/a_0(r_j)$. Numerically,
\begin{equation}
\begin{aligned}
\alpha_1&\simeq0.7687298222010835,\\
\alpha_2&\simeq1.440310990731049,\\
w_1&\simeq4.724084942587050,\\
w_2&\simeq5.786538673414912.
\end{aligned}
\end{equation}
For $0\le t\le1/(w_1+w_2)$, the response
\begin{equation}
\begin{aligned}
\nu_t&=[1-(w_1+w_2)t]\delta_0\\
&\quad+\sum_{j=1}^2\frac{w_jt}{2}(\delta_{\alpha_j}+\delta_{-\alpha_j}),\\
\mu_{x,a}&=\tfrac12\tau_{\theta(x,a)}\nu_t,\qquad M_x=1.
\end{aligned}
\label{eq:fiveatom}
\end{equation}
is nonnegative and belongs to the original measure class. Its first-order moments give
\begin{equation}
\int a_0\dd\nu_t=\int a_1\dd\nu_t=t,\qquad
\int g\dd\nu_t=\Csharp t.
\end{equation}
Therefore
\begin{equation}
\Delta(t)=t+O(t^2),\qquad\eps(t)=\epszero-\Csharp t+O(t^2).
\label{eq:attainment}
\end{equation}
The finitely many error branches are analytic near zero, and the active branches have positive first derivative. Their maximum is locally increasing, permitting $t=\delta+O(\delta^2)$. Equivalently, take $t=\delta-K\delta^2$ with fixed sufficiently large $K$ to ensure feasibility. This upper bound matches \cref{eq:sharp-lower}, proving \cref{thm:sharp}.

\Cref{fig:coefficient} displays the approach of the nonlinear five-atom family to the certified coefficient.
\begin{figure}[tbp]
\centering\includegraphics[width=\columnwidth]{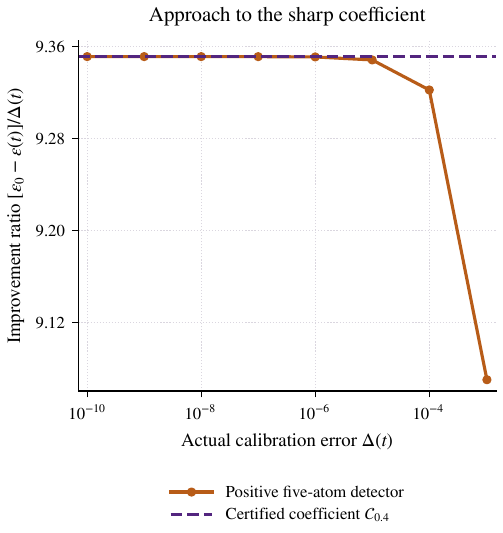}
\caption{Attainment of the sharp coefficient at $\Gamma=0.4$. High-precision positive-history sums give the five-atom improvement ratio $[\epszero-\eps(t)]/\Delta(t)$, which approaches the certified value $\Csharp\simeq9.351091771057010$ as the actual calibration error decreases (towards the left on the logarithmic axis).}
\label{fig:coefficient}
\end{figure}

The theorem establishes the right derivative $\epstar'(0+)=-\Csharp$ for the specified four-setting benchmark. Direct finite-error evaluations illustrate attainment: at $t=10^{-6}$ the five-atom family has $\Delta(t)\simeq1.00002882885\times10^{-6}$ and improvement-to-error ratio $9.35080031713$; at $t=10^{-10}$ the ratio is $9.35109174191$. The finite-error optimum, width dependence, and continuous-setting extension define the subsequent optimization problems.

\section{Verification and reproducibility}
\label{sec:verification}
The verification combines analytical measure-class proofs, a directed interval certificate for the benchmark coefficient, and numerical checks of the implemented formulas. The interval computation certifies the finite-dimensional root and continuous polynomial inequalities. Direct outcome contractions, quadrature, orientation enumeration, and a separately implemented asymmetric tangent program check the source laws, detector formulas, and first-order reduction.

The accompanying \texttt{verify\_study.py} regenerates the numerical tables, vector figures, verification report, and coefficient certificate. The additional programs \texttt{audit\_reduction.py} and \texttt{audit\_tangent.py} implement the checks described in \cref{app:audit}. The computations use NumPy, SciPy, mpmath, and Matplotlib; numerical examples use seed 20260912.

\Cref{tab:verification} summarizes the verification scope and results.

\begin{table}[tbp]
\centering
\footnotesize
\setlength{\tabcolsep}{3pt}
\renewcommand{\arraystretch}{1.08}

\caption{Verification of the source, detector constructions,
and sharp coefficient. The continuous certificate supplies
the validated numerical component of the proof.}
\label{tab:verification}

\begin{tabular}{@{}ll@{}}
\toprule
Calculation & Scope and result \\
\midrule

\shortstack[l]{Continuous\\coefficient\\certificate}
&
\shortstack[l]{Interval inclusion and contraction\\
of the four-variable root; positive\\
quadratic factor and primal masses.\\
All assertions pass.}
\\
\addlinespace

\shortstack[l]{Raw and corrected\\continuation}
&
\shortstack[l]{All eight outcomes for 16 width pairs.\\
Closed witness and preserved\\
marginal agree.}
\\
\addlinespace

\shortstack[l]{Orientation\\certificate}
&
\shortstack[l]{All 4096 reduced eight-angle\\
assignments. Exactly 16\\
calibrated families.}
\\
\addlinespace

\shortstack[l]{General detector\\contractions}
&
\shortstack[l]{100 nonnegative measure families,\\
asymmetric outcomes, arbitrary\\
positive scales, and all feedback\\
policies. Every instance satisfies\\
the analytical lower bound.}
\\
\addlinespace

\shortstack[l]{Source\\convolution}
&
\shortstack[l]{32768-point midpoint quadrature\\
for three unequal-arm examples.\\
Agreement within $10^{-12}$.}
\\
\addlinespace

\shortstack[l]{Endpoint\\Jacobian}
&
\shortstack[l]{Central-difference diagnostic for\\
the full normalized table.\\
Rank seven.}
\\
\addlinespace

\shortstack[l]{Moment\\formulas}
&
\shortstack[l]{Direct normalized-kernel derivatives\\
at four displacements. All three\\
rational functions agree.}
\\
\addlinespace

\shortstack[l]{Positive\\five-atom family}
&
\shortstack[l]{High-precision finite sums\\
down to $t=10^{-10}$. Improvement\\
ratio approaches $\mathcal{C}_{0.4}$.}
\\
\addlinespace

\shortstack[l]{Half-circle\\response}
&
\shortstack[l]{Fourier sum with a geometric\\
tail bound. $\delta\simeq0.1743750480$\\
and $\varepsilon=0$ analytically.}
\\

\bottomrule
\end{tabular}
\end{table}

The interval certificate records the root centers, box radius, approximate inverse Jacobian, coefficient enclosure, quadratic coefficients, discriminant enclosure, and primal masses. The exact definition is the root in that box, rather than a truncated decimal coefficient. \Cref{app:certificate} explains the inclusion calculation so that the numerical certificate can be audited independently of a root finder's reported convergence.

For general numerical detectors, each setting and output is represented by a positive finite atomic measure, which is contracted directly against the continuous $S_\Gamma$. The test set includes near-calibrated and broader perturbations without imposing outcome symmetry. Every setting's positive scale is retained in the feedback weights. The source-convolution check uses the unequal widths $0.17$ and $0.23$ and does not replace their continuous convolution by a discrete sampled source.

The five-atom performance is evaluated from the actual nonlinear measure, not from its first-order expansion. At very small errors, arbitrary-precision arithmetic avoids subtractive cancellation in $[\epszero-\eps(t)]/\Delta(t)$. For the half-circle example, if the odd series is truncated before odd index $N$, a valid absolute correlation-tail bound is
\begin{equation}
\frac8{\pi^2}\frac{e^{-N\Gamma}}{N^2(1-e^{-2\Gamma})}.
\end{equation}
No experimental data or observed detector measurements enter this work. All reported probabilities are mathematical predictions of specified constructions.

\section{Physical significance and future prospects}
\label{sec:discussion}
\subsection{Adaptive composition beyond isolated no-signalling}
The central lesson is that normalized source tables omit information needed for adaptive composition. Each isolated table divides out its own partition weight. If feedback chooses different settings on different earlier-outcome branches, those formerly invisible partition weights can become relative branch weights in a single globally normalized experiment. The exact record condition identifies this mechanism without requiring a particular interpretation of the hidden state.

This matters for model development because agreement with an isolated Bell experiment is a necessary but incomplete calibration target. A detector completion must also specify its input memory, output archives, allowed controls, and boundary conditions, then preserve the accessible marginal at every relevant cut. Such requirements are experimentally motivated operations, not an optional demand that hidden variables themselves evolve forward in time.

The raw continuation witness and the terminal-detector theorem probe different failures. The former exhibits an explicit remote marginal change in a possible spacelike arrangement. The latter proves that an entire terminal response class cannot simultaneously satisfy exact source calibration and the stipulated temporal feedback consistency. The first motivates the question; the second establishes that arbitrary changes inside a detector, if they integrate to the same restricted interface, cannot solve it.

\subsection{Implications of detector rigidity and sharp asymptotic bounds}
The equal-setting source target lies on an extremal boundary of the positive response class. That fact makes a feasibility search analytically tractable: saturation fixes the detector response to antipodal atoms, leaving only orientations and scales. The minimax theorem then quantifies the unavoidable first-record bias. It is stronger than finding one malfunctioning ideal detector, because it excludes every exactly calibrated terminal detector in the stated class.

Robustness is essential to the relevance of that statement. An obstruction that disappears under arbitrarily small calibration errors could be an artifact of ideal endpoint boundaries. The concentration and harmonic-mean estimates show that this does not happen at fixed finite width. A nonzero neighborhood remains incompatible with perfect first-record preservation. The sharp-order theorem further shows that allowing a small source error only improves the optimum linearly.

At the benchmark width, the coefficient $\Csharp$ gives the best leading exchange rate between source-table error and reduction of record disturbance. It is an extremal property of the declared detector class and measures the optimal leading improvement available under its source-interface assumptions. Its significance is methodological as well as quantitative: a continuous detector-measure optimization is controlled by a uniform asymptotic reduction and certified by matching positive and polynomial constructions. The root system provides a reproducible definition, and the polynomial certificate controls the full continuous response domain.

\subsection{Memory, hidden dependence, and the statistical arrow of records}
The protected-tag construction shows a genuine compatibility: accessible records can be preserved while hidden interior bridges remain dependent on future settings. Thus an operational arrow for records need not be imposed on every hidden variable. Equally, the passive-memory obstruction shows why simply copying a result does not generate that compatibility. The weight of the recorded branch must be controlled by the apparatus rule, and that control can depend on preparation information.

The reset examples distinguish a visible display from the information used by subsequent detector interactions. Erasing a display while retaining preparation data in an environment is not the same operation as removing the preparation data while retaining an independent archive. A proposed physical theory must assign both operations explicit dynamics and account for all their outputs. It cannot infer statistical preservation solely from the fact that an old bit still exists somewhere.

The shared-hidden-factor example adds a second caution: a cancellation valid under one hidden ensemble can fail after a previous device changes that ensemble. The hidden-state update is therefore part of the composition interface. One-device normalization tests are insufficient for repeated use unless their normalization is valid for every reachable interface state.

\subsection{Structural scope of the detector obstruction}
\Cref{tab:scope} collects the assumptions attached to each result.

\begin{table}[tbp]
\centering
\footnotesize
\setlength{\tabcolsep}{3pt}
\renewcommand{\arraystretch}{1.08}

\caption{Results and the structural assumptions defining
their domains.}
\label{tab:scope}

\begin{tabular}{@{}ll@{}}
\toprule
Problem & Assumptions and result \\
\midrule

\shortstack[l]{Raw adaptive\\continuation}
&
\shortstack[l]{Finite positive widths, literal\\
segment weights, and global\\
normalization. Exact\\
remote-marginal witness.}
\\
\addlinespace

\shortstack[l]{Conditional\\continuation}
&
\shortstack[l]{Finite local measurement trees\\
with normalized transitions.\\
Prefix and transcript preservation.}
\\
\addlinespace

\shortstack[l]{Common terminal\\detector}
&
\shortstack[l]{Arrival-angle source interface\\
and (D1)--(D4). Rigidity and\\
sharp exact record bias.}
\\
\addlinespace

\shortstack[l]{Imperfect\\calibration}
&
\shortstack[l]{Nonnegative response measures\\
at fixed finite source width.\\
Finite-error bounds and linear\\
small-error improvement.}
\\
\addlinespace

\shortstack[l]{Sharp\\coefficient}
&
\shortstack[l]{Four calibrated settings at\\
$\Gamma=0.4$. Full measure-class\\
optimum to first order, attained\\
by a five-atom family.}
\\
\addlinespace

\shortstack[l]{Memory\\mechanism}
&
\shortstack[l]{Explicit preparation tag and\\
normalized continuation sector.\\
Record preservation with hidden\\
future dependence; reset\\
and reuse criteria.}
\\
\addlinespace

Reciprocity
&
\shortstack[l]{Fixed normalized transition kernel.\\
Endpoint reciprocity.}
\\

\bottomrule
\end{tabular}
\end{table}

A fixed source reference measure does not assume away hidden retrocausality; the normalized source posterior can remain setting dependent. Bell locality appears only in particular conclusions or alternative constructions, such as uniform hidden-leg normalization and constant-total response examples. Likewise, (D1)--(D4) should not be replaced by the vague phrase ``one detector law.'' One microscopic interaction can receive different physical input states, and those inputs can invalidate the terminal reduction used in the theorem.

The theorem therefore constrains a precise source interface and composition law. The positive continuation constructions identify how preparation information and branch normalization preserve observable records while retaining hidden future dependence. The adaptive conditional reconstruction of Ref.~\cite{wharton2020} implements the corresponding alternative branch prescription, providing a direct point of comparison for physical apparatus extensions.

\subsection{Prospects for a physical apparatus theory}
A natural next investigation is a source--controller--detector model with an explicit preparation input, pointer, and environment. The new input must have a defined state space, transport law, interaction, and boundary condition. Merely adding an independent random register that integrates to another allowed terminal measure does not change the theorem class. To escape the obstruction, the extension must alter an actual interface or composition assumption.

The causal partner experiment provides a useful target. A controller can in principle carry the earlier setting and outcome as ordinary forward information; the effective repair shows that the relevant source-path width also matters. The open task is to derive the required integrated branch masses from one physical interaction, without inserting a normalization factor only after recognizing the overall protocol. Every pointer outcome, time record, failure outcome, and reset output must be retained. An implementation that conditions on a successful detector event would require a separate analysis of the rejected events.

A successful apparatus model would be important beyond repairing this particular example. It would demonstrate how future-dependent hidden variables can coexist with a stable statistical history of macroscopic records under reuse and feedback. A further impossibility theorem for a larger, physically specified interface would also be informative: it would identify which preparation, environmental, or source structures are genuinely necessary.

\subsection{Prospects for broader composition and quantitative tests}
Several mathematical questions are now well posed. The exact finite-error curve could be approached by bounds that retain correlations between diagonal defects, oblique errors, and marginal imbalances, rather than bounding them separately. The continuous moment functions \cref{eq:a0,eq:a1,eq:g} suggest studying how the number and positions of optimal contacts change with $\Gamma$. A coefficient for a continuum of calibrated settings would require all angular error constraints and cannot be inferred from the four-setting certificate.

A network extension should retain the complete hidden interface, including environmental correlations and every controller memory used later. Two-source joint measurements and entanglement swapping would then test whether the update law remains closed under a new class of devices. The restricted Gram result in \cref{app:gram} illustrates how adaptive preservation can force locality under additional symmetry assumptions; its assumption-boundary example shows why that conclusion does not extend automatically.

The bounds can also guide eventual empirical model comparison. At the level of exact probabilities, any realization satisfying the declared structural assumptions and calibrated within $\delta$ must have first-record disturbance at least $F_{\rm new}(\delta)$. An experiment finding both better calibration and a smaller disturbance would reject that conjunction of assumptions. Turning this observation into a statistical protocol requires finite-sample confidence regions, an operational model of setting implementation, timing and no-click records, and justification of the detector interface. These steps define a concrete route from the present probability bounds to experimental tests of the specified source--detector architecture.

\section{Conclusion}
A finite-width Schulman-type source can possess uniform isolated marginals while a literal raw-weight extension fails under classical feedback. The integrated future row masses determine whether earlier accessible records remain statistically unchanged. Normalized conditional continuations and protected preparation memory provide restricted positive constructions, but passive copying, tag-only reset, and adaptive first detection reveal their limits.

Within an explicit terminal-detector class, exact source calibration forces equal-mass antipodal responses and entails the sharp record disturbance $[2\cosh(2\Gamma)]^{-1}$. The obstruction persists under imperfect calibration, the optimal near-zero improvement is linear at every fixed positive width, and a continuous primal--dual certificate determines its coefficient at $\Gamma=0.4$ for four settings. The resulting expansion is attained by a positive five-atom detector.

These findings establish adaptive record preservation as a quantitative design constraint for retrocausal models. The exact obstruction, stable finite-error bounds, and attained sharp coefficient provide a benchmark for apparatus theories that include preparation information, hidden-state updates, and environmental records. This framework supports a systematic programme of physically specified detector extensions, repeated-use protocols, and network composition tests.

\paragraph{Data and code availability.}
The accompanying source package contains the editable manuscript, bibliography, vector figures, verification programs, and their numerical and interval-certificate outputs. The reported results are theoretical and computational. All figures and numerical tables are reproducible from the supplied programs.

\paragraph{AI-assisted research and manuscript preparation.}
The author used OpenAI's ChatGPT to assist with implementation of numerical verification code, and manuscript preparation. The author takes responsibility for the final arguments, computations, interpretations, and conclusions.

\FloatBarrier
\bibliographystyle{quantum}
\bibliography{references}

@article{almada2016,
 author={Almada, D. and Ch'ng, K. and Kintner, S. and Morrison, B. and Wharton, K. B.},
 title={Are Retrocausal Accounts of Entanglement Unnaturally Fine-Tuned?},
 journal={International Journal of Quantum Foundations}, volume={2}, pages={1--16}, year={2016},
 url={https://arxiv.org/abs/1510.03706}, note={arXiv:1510.03706v1}}

@article{wharton2020,
 author={Wharton, K. B. and Argaman, N.},
 title={Colloquium: {Bell}'s theorem and locally mediated reformulations of quantum mechanics},
 journal={Reviews of Modern Physics}, volume={92}, pages={021002}, year={2020},
 doi={10.1103/RevModPhys.92.021002}, url={https://arxiv.org/abs/1906.04313}, note={arXiv:1906.04313v3}}

@article{argaman2026,
 author={Argaman, Nathan}, title={Ontic and epistemic states in the theory of spacetime-local beables},
 journal={Entropy}, volume={28}, number={6}, pages={584}, year={2026},
 doi={10.3390/e28060584}, url={https://arxiv.org/abs/2609.00848}}

@article{allcock2009,
 author={Allcock, Jonathan and Brunner, Nicolas and Linden, Noah and Popescu, Sandu and Skrzypczyk, Paul and V{\'e}rtesi, Tam{\'a}s},
 title={Closed sets of nonlocal correlations}, journal={Physical Review A}, volume={80}, pages={062107}, year={2009},
 doi={10.1103/PhysRevA.80.062107}, url={https://arxiv.org/abs/0908.1496}}

@misc{selby2022,
 author={Selby, John H. and Stasinou, Maria E. and Gogioso, Stefano and Coecke, Bob},
 title={Time symmetry in quantum theories and beyond}, year={2022}, note={\href{https://doi.org/10.48550/arXiv.2209.07867}{arXiv:2209.07867v2}, revised 2024},
 doi={10.48550/arXiv.2209.07867}, url={https://arxiv.org/abs/2209.07867}}

@article{wood2015,
 author={Wood, Christopher J. and Spekkens, Robert W.},
 title={The lesson of causal discovery algorithms for quantum correlations: Causal explanations of {Bell}-inequality violations require fine-tuning},
 journal={New Journal of Physics}, volume={17}, pages={033002}, year={2015},
 doi={10.1088/1367-2630/17/3/033002}, url={https://arxiv.org/abs/1208.4119}}

@misc{brogioli2024,
 author={Brogioli, Doriano}, title={A no-go theorem for sequential and retro-causal hidden-variable theories based on computational complexity},
 year={2024}, note={\href{https://doi.org/10.48550/arXiv.2409.11792}{arXiv:2409.11792}}, doi={10.48550/arXiv.2409.11792}, url={https://arxiv.org/abs/2409.11792}}

@article{price2021,
 author={Price, Huw and Wharton, Ken}, title={Entanglement Swapping and Action at a Distance},
 journal={Foundations of Physics}, volume={51}, pages={105}, year={2021},
 doi={10.1007/s10701-021-00511-3}, url={https://arxiv.org/abs/2101.05370},
 note={The extended arXiv:2101.05370v4 includes the appendix discussed here}}
\onecolumn
\newpage
\appendix
\section{Kernel identities and normalization details}
\label{app:identities}
For $\nu>0$, the Poisson-kernel expansion of \cref{eq:kernel} is
\begin{equation}
k_\nu(d)=\frac1{2\pi}\left(1+2\sum_{n\ge1}e^{-n\nu}\cos nd\right).
\end{equation}
Absolute convergence permits termwise convolution, which multiplies Fourier coefficients and proves \cref{eq:convolution}. Antipodal summation removes the odd harmonics and yields \cref{eq:R}. The extrema are $k_\nu(0)=\coth(\nu/2)/(2\pi)$ and $k_\nu(\pi)=\tanh(\nu/2)/(2\pi)$.

For the source, the endpoint difference under the antipodal initial constraint is congruent to $x-y$ when $ab=-1$ and to $x-y+\pi$ when $ab=+1$. Hence the four-outcome normalizer is $2[k_\Gamma(x-y)+k_\Gamma(x-y+\pi)]$. The identity
\begin{equation}
\frac{k_\Gamma(d)-k_\Gamma(d+\pi)}{k_\Gamma(d)+k_\Gamma(d+\pi)}
=\sech\Gamma\cos d
\end{equation}
proves the target law. Applied to a same-particle segment, it gives the positive-sign transition correlation in \cref{eq:transition}.

The source posterior \cref{eq:hidden-source} is obtained by summing each leg's two endpoint weights before normalizing the $\lambda$ integral. Its setting dependence is therefore present despite a uniform reference measure. At an interior continuation cut, convolution with the endpoint sum gives
\begin{equation}
\int k_{\nu_1}(\phi-\theta(q,b))R_{\nu_2}(z-\phi)\dd\phi=R_{\nu_1+\nu_2}(z-q),
\end{equation}
where antipodal periodicity removes dependence on the starting sign in the denominator. This explicitly verifies the averaged record condition for \cref{eq:bridge}.

\section{Derivative formulas for the moment certificate}
\label{app:derivatives}
A direct route to \cref{eq:a0,eq:a1,eq:g} starts with a symmetric tail of unit infinitesimal mass at displacement $\alpha$:
\begin{equation}
\nu_t=(1-t)\delta_0+\tfrac t2(\delta_\alpha+\delta_{-\alpha}).
\end{equation}
Let $s(d)=C_\Gamma/(c+\cos d)$ and
\begin{equation}
H_\alpha(d)=s(d+\alpha)+s(d-\alpha)-2s(d).
\end{equation}
Then $K_t(d)=s(d)+tH_\alpha(d)+O(t^2)$. Put
\begin{equation}
J_t(d)=\tfrac12[K_t(d)+K_t(d+\pi)],\qquad
E_t(d)=\frac{K_t(d)-K_t(d+\pi)}{K_t(d)+K_t(d+\pi)}.
\end{equation}
Differentiation at $t=0$ gives
\begin{equation}
\dot E(d)=\frac{2[s(d+\pi)H_\alpha(d)-s(d)H_\alpha(d+\pi)]}
{[s(d)+s(d+\pi)]^2}.
\end{equation}
Thus $a_0(\cos\alpha)=\dot E(0)/2$ and $a_1(\cos\alpha)=\dot E(\pi/4)/2$. For the disturbance
\begin{equation}
\eps(t)=\frac{J_t(0)-J_t(\pi/2)}{2[J_t(0)+J_t(\pi/2)]}
\end{equation}
near the ideal detector, the positive improvement derivative is
\begin{equation}
g(\cos\alpha)=-\dot\eps(0)
=\frac{J_0(0)\dot J(\pi/2)-J_0(\pi/2)\dot J(0)}{[J_0(0)+J_0(\pi/2)]^2}.
\end{equation}
Substituting the rational kernel and $\cos\alpha=z$ yields the displayed moment functions. Linearity gives the first variation for a general even measure; the uniform weak-norm remainder is supplied by \cref{eq:bilinear-remainder}. This last step is what permits small diffuse responses as well as finite tails.

For the full finite five-atom response with locations $\xi_i$ and weights $w_i(t)$, use the exact finite sums
\begin{equation}
K_t(d)=\sum_{i,j}w_i(t)w_j(t)\frac{C_\Gamma}{c+\cos(d+\xi_i-\xi_j)}.
\end{equation}
The calibration and disturbance on the stated alphabet are
\begin{equation}
\Delta(t)=\frac12\max_{d\in\{0,\pi/4,\pi/2\}}|E_t(d)+\sech\Gamma\cos d|,
\qquad
\eps(t)=\frac{\max_dJ_t(d)-\min_dJ_t(d)}{2[\max_dJ_t(d)+\min_dJ_t(d)]}.
\end{equation}
For equal scales and rotation-covariant responses, every first setting has access to the same set of separations, which justifies the final extremal-ratio expression.

\section{Explicit root box and interval proof procedure}
\label{app:certificate}
The coefficient is defined without relying on rounded parameter values. Let $F$ be the four-component function in \cref{eq:root-system}, in the variable order $(\lambda_0,\lambda_1,r_1,r_2)$. Each coordinate interval has radius $10^{-45}$ about the following exact terminating decimal center:
\begin{center}\small
\begin{tabular}{@{}cl@{}}\toprule
Coordinate & Decimal center\\\midrule
$\lambda_0$ & \texttt{8.61244939901755559576260020598322651327701659030544}\\
$\lambda_1$ & \texttt{0.73864237203945398769150565486663373104609701033438}\\
$r_1$ & \texttt{0.71879430577723234164553401714053171047810388392274}\\
$r_2$ & \texttt{0.13011536807962724246635804934153974195400657439379}\\\bottomrule
\end{tabular}
\end{center}
The machine-readable certificate retains longer centers and tighter evaluation detail. The shortened centers above change the box center by less than $10^{-50}$; the same interval-inclusion test also verifies this displayed box.

Here is the validation procedure. Let $X$ denote the box, $x_0$ its center, and $A$ a high-precision approximate inverse of $F'(x_0)$. Treat every stored entry of $A$ as an exact terminating decimal when evaluating intervals. Bound the derivative $F'(X)$ by interval polynomial evaluation and form
\begin{equation}
\mathcal K(X)=x_0-AF(x_0)+[I-AF'(X)](X-x_0).
\label{eq:interval-map}
\end{equation}
The script verifies strict inclusion $\mathcal K(X)\subset\operatorname{int}X$ and an infinity-norm bound below one for $I-AF'(X)$. The map $x\mapsto x-AF(x)$ maps the closed box into itself and is a contraction there. Its unique fixed point is a root of $F$, since $A$ is nonsingular. The exact coefficient is the sum of its first two coordinates.

All entries of the derivative can be computed polynomially. If $\mathcal P=\lambda_0D+\lambda_1E-N$, the two rows at contact $r_1$ are
\begin{equation}
\bigl(D(r_1),E(r_1),\mathcal P'(r_1),0\bigr),\qquad
\bigl(D'(r_1),E'(r_1),\mathcal P''(r_1),0\bigr),
\end{equation}
and the rows at $r_2$ have the corresponding derivatives in the fourth column. The constants $c=\cosh(0.4)$ and $\sqrt2$ are themselves enclosed by interval arithmetic; they are not replaced by machine-precision constants.

Divide the degree-six polynomial by $(z-r_1)^2(z-r_2)^2$ using interval polynomial arithmetic to enclose the quotient coefficients. Exact divisibility follows from the root equations and distinct contacts. The computed quotient bounds $q_2>0$ and $q_1^2-4q_0q_2<-44.7812$ prove strict positivity of the quadratic for every real $z$. Finally, interval evaluation checks $\lambda_0,\lambda_1,\eta_1,\eta_2>0$ and encloses their coefficient sum. These steps certify the continuous dual inequality and a matching positive primal construction.

\section{Additional constructive calibration families}
\label{app:families}
The earlier robustness comparison can be reproduced by mixing an ideal endpoint response with the cosine response. For $0\le r\le1$, set
\begin{equation}
\mu^{(r)}_{x,a}=\frac{1-r}{2}\delta_{\theta(x,a)}+r\mu^{\cos}_{x,a}.
\end{equation}
With $d=x-y$, $v=\sech\Gamma$, $q=e^{-\Gamma}$, and
\begin{equation}
J(d)=\frac{C_\Gamma c}{c^2-\cos^2d},\qquad z_c=\frac1{2\pi},
\end{equation}
direct contraction gives
\begin{align}
z_r(d)&=(1-r)^2J(d)+(2r-r^2)z_c,\\
E_r(d)&=-\cos d\,
\frac{(1-r)^2J(d)v+z_cq(2r-\tfrac32r^2)}{z_r(d)}.
\end{align}
The marginals are uniform. Its calibration error is the largest half-correlation error over the alphabet, and its disturbance is
\begin{equation}
\eps_r=\frac{z_r(0)-z_r(\pi/2)}{2[z_r(0)+z_r(\pi/2)]}.
\end{equation}
An oblique pair can determine the calibration error, so checking only the diagonal would be insufficient. This family is an admissible interpolation, not a full-class optimization.

An earlier near-optimal finite-tail family uses
\begin{equation}
\nu_t=[1-(b_1+b_2)t]\delta_0+
\frac{b_1t}{2}(\delta_{\alpha_1}+\delta_{-\alpha_1})+
\frac{b_2t}{2}(\delta_{\alpha_2}+\delta_{-\alpha_2}),
\end{equation}
with exact decimal parameters
$\alpha_1=0.7686$, $\alpha_2=1.4396$, $b_1=4.7181$, $b_2=5.7940$.
Its derivative formulas yield $\Delta'(0+)\simeq0.9999934792$ and $-\eps'(0+)\simeq9.3509191479$, so the improvement-to-error limit is approximately $9.3509801237$. This supplies a concrete improvement over smoothing independently of any optimality claim. The root-defined family in \cref{eq:fiveatom} replaces it for the sharp asymptotic theorem. Asymptotic optimality does not require it to outperform every other family at every nonzero error.

\section{Gram-matrix constraints on adaptive composition}
\label{app:gram}
This result concerns a broader response representation, but uses additional symmetry. Let equal effective arms have nonnegative functions $f_x(a,\lambda)$ and total envelopes $F_x(\lambda)=\sum_af_x(a,\lambda)$ with finite square integrals under a fixed measure $\mu$. Assume $F_x(\lambda+\pi)=F_x(\lambda)$ and exact uniform isolated marginals. Then the source partition matrix is the positive Gram matrix
\begin{equation}
Z_{xy}=\int F_x(\lambda)F_y(\lambda)\mu(\dd\lambda).
\end{equation}
\begin{proposition}[Gram-envelope collapse]
If setting-only raw feedback preserves every first marginal $1/2$ on a common setting alphabet including diagonal experiments, all $F_x$ agree almost everywhere, and the isolated model is Bell-local.
\end{proposition}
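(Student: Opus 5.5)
The plan is to reuse the branch-odds mechanism behind \cref{lem:odds} and \cref{thm:exact}, with the Gram structure replacing the explicit kernel. Because the arms are equal and the marginals are exactly uniform, the outcome-$a$ row weight of Alice at setting $x$, with Bob at $y$, is $\sum_b W^{ab}_{xy}$. Evenness of $F_y$ under $\lambda\mapsto\lambda+\pi$, combined with the uniform marginals, should make this row weight equal to $Z_{xy}/2$. This is the step to confirm first, and it is where the antipodal-evenness hypothesis enters.

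Then, for a setting-only policy $f$, \cref{eq:feedback-general} gives
\[
p_x^f(+)=\frac{Z_{x,f(+)}}{Z_{x,f(+)}+Z_{x,f(-)}}.
\]
Preservation of the marginal $1/2$ for every binary policy forces $Z_{xy}=Z_{xy'}$ for all $y,y'$ in the alphabet. Hence each row of $Z$ is constant, and by symmetry of $Z$ all entries are equal to one constant $\kappa>0$. In particular the diagonal entries equal the off-diagonal ones.

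Next I would exploit the Gram structure. The quantity
\[
\norm{F_x-F_y}^2_{L^2(\mu)}=Z_{xx}+Z_{yy}-2Z_{xy}
\]
vanishes, so $F_x=F_y$ $\mu$-almost everywhere for every pair. Diagonal experiments are exactly what supply $Z_{xx}$; without them the norm identity is unavailable. This is the step that uses the hypothesis that the alphabet includes diagonals.

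Finally, for Bell-locality, set $F:=F_x$, which is common to all settings, and define
\[
\tilde\mu(\dd\lambda)=F(\lambda)^2\mu(\dd\lambda)/\kappa,\qquad
\chi_x(a\mid\lambda)=f_x(a,\lambda)/F(\lambda)\quad\text{on }\{F>0\}.
\]
Then $\tilde\mu$ is a setting-independent probability measure, because $\int F^2\dd\mu=\kappa$. Each $\chi_x(\cdot\mid\lambda)$ is a normalized local response, since $\sum_a f_x(a,\lambda)=F(\lambda)$. The joint weight $\int f_x(a,\lambda)f_y(b,\lambda)\dd\mu$, divided by $\kappa$, equals $\int\chi_x(a\mid\lambda)\chi_y(b\mid\lambda)\dd\tilde\mu$, which is an explicit local hidden-variable model.

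The main obstacle I expect is the first step: showing that the row weights are exactly $Z/2$ independently of $a$. If the envelope's evenness combined with uniform marginals does not give this directly, I would instead work with the odds ratio as in \cref{lem:odds}. The uniform-marginal factors cancel there, so preservation for all policies still yields that $Z_{xy}/Z_{xy'}$ is independent of $x$, and taking $x=y$ together with diagonal entries should recover the equality of all entries.
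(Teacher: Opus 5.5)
Your proposal is correct and follows the paper's proof: independently chosen branch settings make each row of $Z$ constant, symmetry makes the constant common, the Gram identity $\norm{F_x-F_y}^2_{L^2(\mu)}=Z_{xx}+Z_{yy}-2Z_{xy}=0$ collapses the envelopes, and the reweighted measure $F^2\mu/Z$ with responses $f_x/F$ gives the Bell-local model. One correction: the row weight $Z_{xy}/2$ follows from the uniform marginals alone; antipodal evenness is instead what makes $Z_{xy}=\int F_x(\lambda)F_y(\lambda+\pi)\,\mu(\dd\lambda)$ a symmetric Gram matrix and what normalizes Bob's factor $f_y(b,\lambda+\pi)/F(\lambda)$, so your joint weight should carry that $\lambda+\pi$ shift.
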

\begin{proof}
Feedback branch probabilities are $Z_{x,y_+}/(Z_{x,y_+}+Z_{x,y_-})$. Preservation for independently selected branch settings makes each row of $Z$ constant. Symmetry makes that constant common to all rows. Thus
\begin{equation}
\norm{F_x-F_y}_{L^2(\mu)}^2=Z_{xx}+Z_{yy}-2Z_{xy}=0.
\end{equation}
Write the common function as $F$ and $Z=\int F^2\dd\mu$. On $F>0$,
\begin{equation}
P(a,b\mid x,y)=\int\frac{F(\lambda)^2\mu(\dd\lambda)}{Z}
\frac{f_x(a,\lambda)}{F(\lambda)}
\frac{f_y(b,\lambda+\pi)}{F(\lambda)}.
\end{equation}
The latter factors are normalized local responses under a setting-independent hidden distribution. The zero-$F$ set has no weight.
\end{proof}

Antipodal evenness cannot simply be removed. Under uniform circle measure, take
\begin{equation}
F_x(\lambda)=1+\zeta t_x(\cos\lambda+\cos2\lambda),\qquad |\zeta t_x|<1/2.
\end{equation}
These strictly positive envelopes need not agree, but
\begin{equation}
\int_\T F_x(\lambda)F_y(\lambda+\pi)\frac{\dd\lambda}{2\pi}=1
\end{equation}
because odd and even harmonic contributions cancel. This is an envelope counterexample to the unrestricted Gram inference, not a full detector outcome model. In the exactly calibrated fixed-kernel class, \cref{thm:rigidity} already forces the relevant evenness.

\section{Computational verification of the uniform reduction}
\label{app:audit}
The uniform reduction is supported by the explicit weak-norm estimate \cref{eq:gauge-explicit-bound} and the direct witness expansion \cref{eq:w-explicit-expansion}. This appendix gives computational checks of the gauge structure, asymmetric detector perturbations, and the resulting first-order optimum.

The gauge argument uses the full four-setting source tables. Their diagonal entries improve the mass imbalance from an initial $O(\sqrt\delta)$ estimate to $O(\delta)$. Orthogonal pairs remove relative shifts between the two outputs and between orthogonal settings; an oblique pair removes the remaining relative shift between the orthogonal pairs. The only angular null direction is a common physical rotation. The explicit analytic Jacobian has rank seven; a numerical evaluation of that analytic matrix gives seven nonzero singular values, the smallest approximately $0.654079$, and a rotation-null residual below $10^{-14}$. This numerical rank check supplements the elementary kernel proof.

A separately implemented finite-history calculation tests two asymmetric families: a concentrated response with weak displaced tails, and a response whose split width is proportional to $\sqrt t$. Both include independent order-$t$ drifts, outcome-mass imbalances, a removable common rotation, unequal baseline scales between orthogonal pairs, and perturbed within-pair scales. Twelve instances use $t$ from $10^{-3}$ to $3\times10^{-6}$. All sixteen dihedral transforms are evaluated. The weak-norm bound divided by actual calibration error remains bounded, and the table-averaging and witness-averaging remainders divided by $t^2$ approach finite values. The nonzero second-order remainders agree with the uniform first-order averaging statement used in the theorem.

A further sampled tangent program allows independent jump directions, drifts, positive diffusion directions, outcome-mass imbalances, and setting scales at every setting and outcome. Its 561 variables and 256 inequalities include every first-order source total-variation constraint and every active first-record comparison. The displacement set contains a uniform grid and the certified contact displacements. It returns an improvement coefficient $9.351091771057005$, with maximum constraint violation below $7.7\times10^{-14}$. This independently implemented finite program agrees with the coefficient established over the continuous measure class by the analytical reduction and polynomial certificate.

The programs \texttt{audit\_reduction.py} and \texttt{audit\_tangent.py} and their result files accompany the manuscript. The separate implementations provide reproducible checks of the analytical reduction and its sharp coefficient; the uniform estimates and validated continuous polynomial certificate establish the full measure-class result.

\end{document}